\documentclass{amsart}
\usepackage[margin=1in]{geometry}
\usepackage{color,graphicx,mathrsfs}

\def\eq{\begin{equation}}
\def\endeq{\end{equation}}
\def\bbm{\begin{bmatrix}}
\def\ebm{\end{bmatrix}}
\def\bpm{\begin{pmatrix}}
\def\epm{\end{pmatrix}}
\def\bvm{\begin{vmatrix}}
\def\evm{\end{vmatrix}}
\def\d{\text{d}}

\newtheorem{rhp}{Riemann-Hilbert Problem}

\newtheorem{theorem}{Theorem}

\newtheorem{lemma}{Lemma}
\theoremstyle{definition}
\newtheorem{definition}{Definition}
\newtheorem{remark}{Remark}

\makeatletter
\@addtoreset{equation}{section}
\makeatother

\makeatletter
\renewcommand*\env@matrix[1][c]{\hskip -\arraycolsep
  \let\@ifnextchar\new@ifnextchar
  \array{*\c@MaxMatrixCols #1}}
\makeatother

\title[Large-degree asymptotics of rational Painlev\'e-IV functions]{Large-degree asymptotics of rational Painlev\'e-IV functions associated to generalized Hermite polynomials}
\author{Robert J. Buckingham}
\address[R. J. Buckingham]{Department of Mathematical Sciences\\ University of Cincinnati\\ PO Box 210025\\ Cincinnati, OH 45221.}
\email{buckinrt@uc.edu}
\urladdr{http://homepages.uc.edu/~buckinrt/}

\begin{document}
\begin{abstract}
The Painlev\'e-IV equation has three families of rational solutions generated 
by the generalized Hermite polynomials.  Each family is indexed by two positive 
integers $m$ and $n$.  These functions have applications to nonlinear wave 
equations, random matrices, fluid dynamics, and quantum mechanics.  Numerical 
studies suggest the zeros and poles form a deformed $n\times m$ rectangular 
grid.  Properly scaled, the zeros and poles appear to densely fill 
certain curvilinear rectangles as $m,n\to\infty$ with $r:=m/n$ a fixed 
positive real number.  Generalizing a method of Bertola and Bothner 
\cite{Bertola:2014} used to study rational Painlev\'e-II functions, we 
express the generalized Hermite rational Painlev\'e-IV functions in terms 
of certain orthogonal polynomials on the unit circle.  Using the Deift-Zhou 
nonlinear steepest-descent method, we asymptotically analyze the associated 
Riemann-Hilbert problem in the limit $n\to\infty$ with $m=r\cdot n$ for 
$r$ fixed.  We obtain an explicit characterization of the 
boundary curve and determine the leading-order asymptotic expansion of the 
functions in the pole-free region.
\end{abstract}
\maketitle

\section{Introduction}

Rational solutions of the Painlev\'e-IV equation
\eq
w_{yy}=\frac{(w_y)^2}{2w}+\frac{3}{2}w^3+4yw^2+2(y^2-\alpha)w+\frac{\beta}{w}, \quad w:\mathbb{C}\to\mathbb{C} \text{ with parameters } \alpha,\beta\in\mathbb{C}
\label{p4}
\endeq
arise in the study of steady-state distributions of electric charges for a 
two-dimensional Coulomb gas in a parabolic potential \cite{Marikhin:2001};  
rational solutions of the defocusing nonlinear Schr\"odinger equation 
\cite{Clarkson:2006}, the Boussinesq equation \cite{Clarkson:2008}, the 
classical Boussinesq system \cite{Clarkson:2009a}, and the point vortex 
equations with quadrupole background flow \cite{Clarkson:2009b}; 
rational-logarithmic solutions of the dispersive water wave equation and the 
modified Boussinesq equation \cite{Clarkson:2009c}; rational extensions of the 
harmonic oscillator and related exceptional orthogonal polynomials 
\cite{Marquette:2013,MarquetteQ:2016};  and the recurrence coefficients for 
polynomials 
orthogonal to the weight $e^{-x^2}|x|^n$ and Gaussian Unitary Ensemble matrices 
with repeated eigenvalues \cite{ChenF:2006}.  The fact that these functions 
have interesting mathematical properties in their own right is suggested by 
plots of the zeros and poles.  Indeed, as $\alpha$ and $\beta$ vary along 
certain sequences, the zeros and poles (when appropriately scaled) appear to 
form strikingly regular patterns in the complex plane that densely fill out 
curvilinear rectangles (for the rational functions that can be expressed in 
terms of generalized Hermite polynomials;  see Figures 
\ref{Hmn-zeros1}--\ref{Hmn-zeros2}) and curvilinear rectangles 
with equilateral curvilinear triangles attached to the edges (for the rational 
solutions expressed in terms of generalized Okamoto polynomials) 
\cite{Clarkson:2003}.  In this work we explicitly determine the boundary curves 
for the rational Painlev\'e-IV functions associated to the generalized Hermite 
polynomials, and derive the leading-order asymptotic expansions of these 
rational functions in the exterior of the zero/pole region.  

\begin{figure}[h]
\includegraphics[width=2.1in]{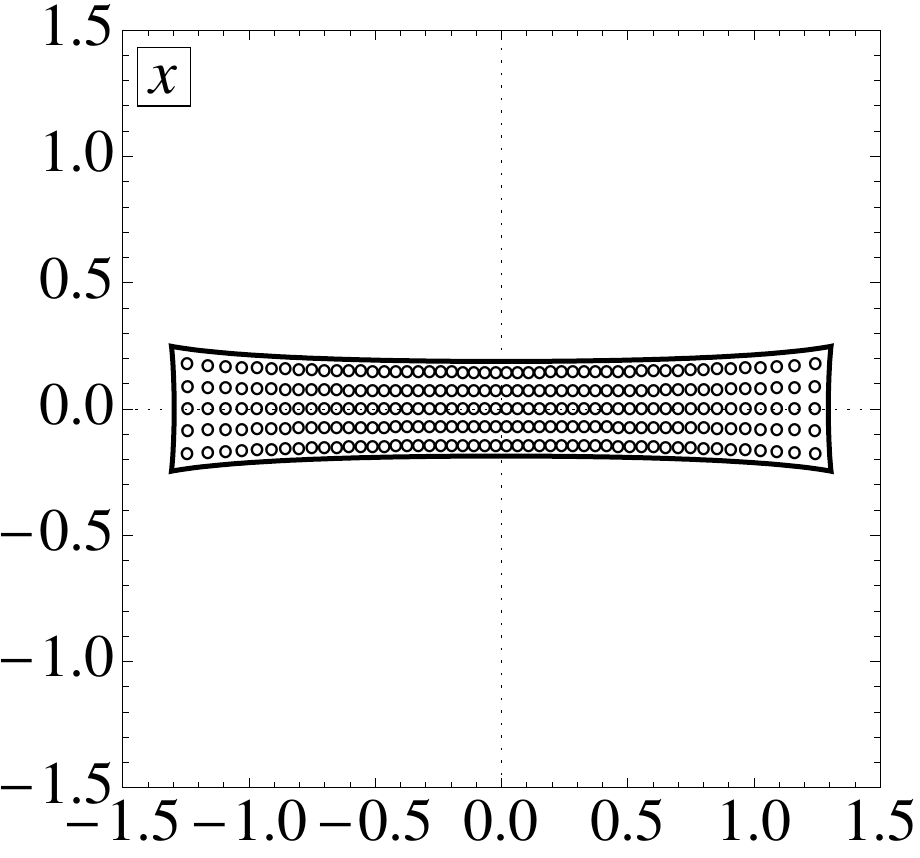}
\includegraphics[width=2.1in]{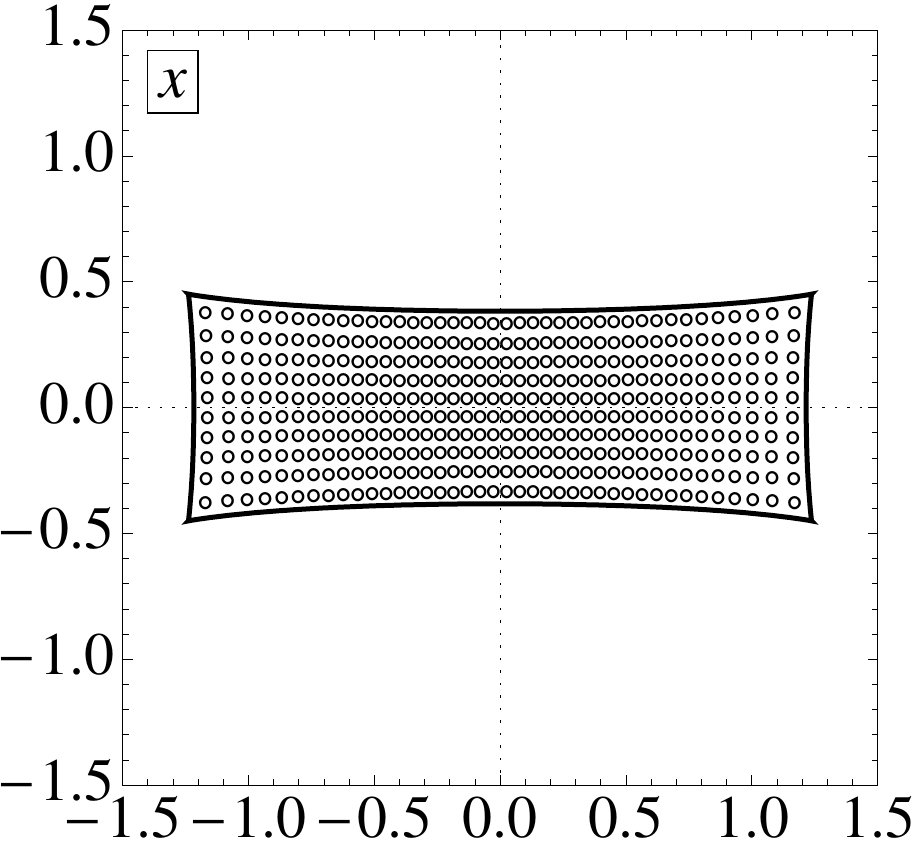}
\includegraphics[width=2.1in]{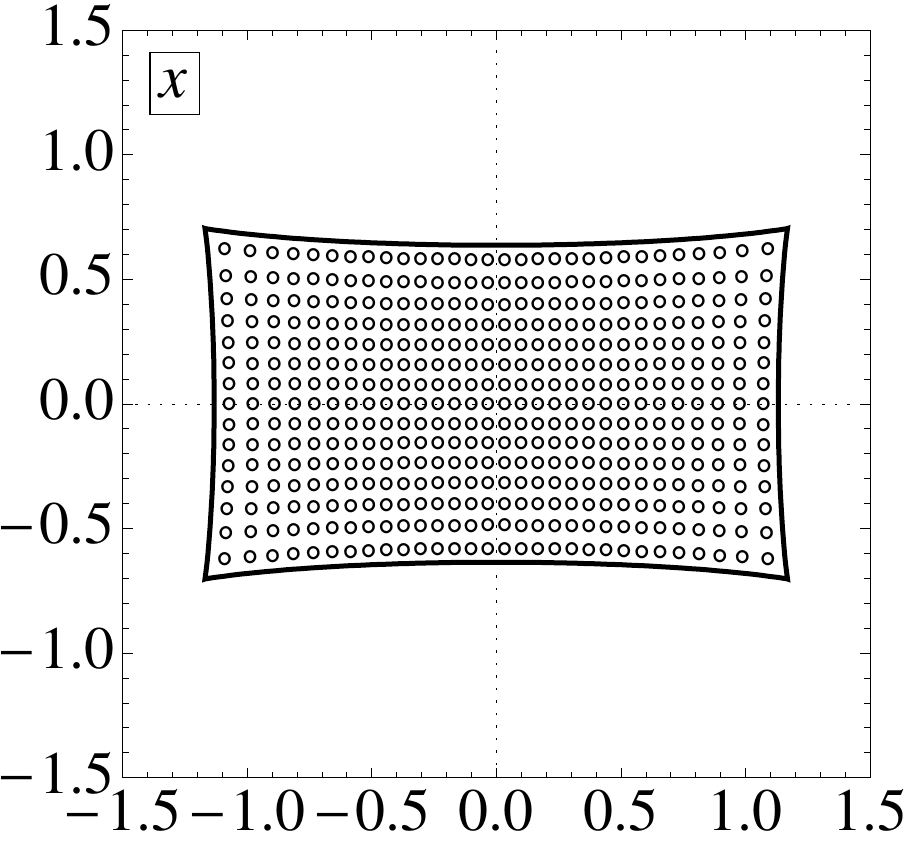}
\caption{\emph{The zeros of $H_{m,n}(m^{1/2}x)$ in the complex $x$-plane
for $(m,n,r)=\left(50,5,10\right)$ (left), 
$(m,n,r)=\left(40,10,4\right)$ (center), and 
$(m,n,r)=\left(30,15,2\right)$ (right), along 
with the boundary of the elliptic region $E_r$ that depends only on $r=m/n$.}}
\label{Hmn-zeros1}
\end{figure}
\begin{figure}[h]
\includegraphics[width=2.1in]{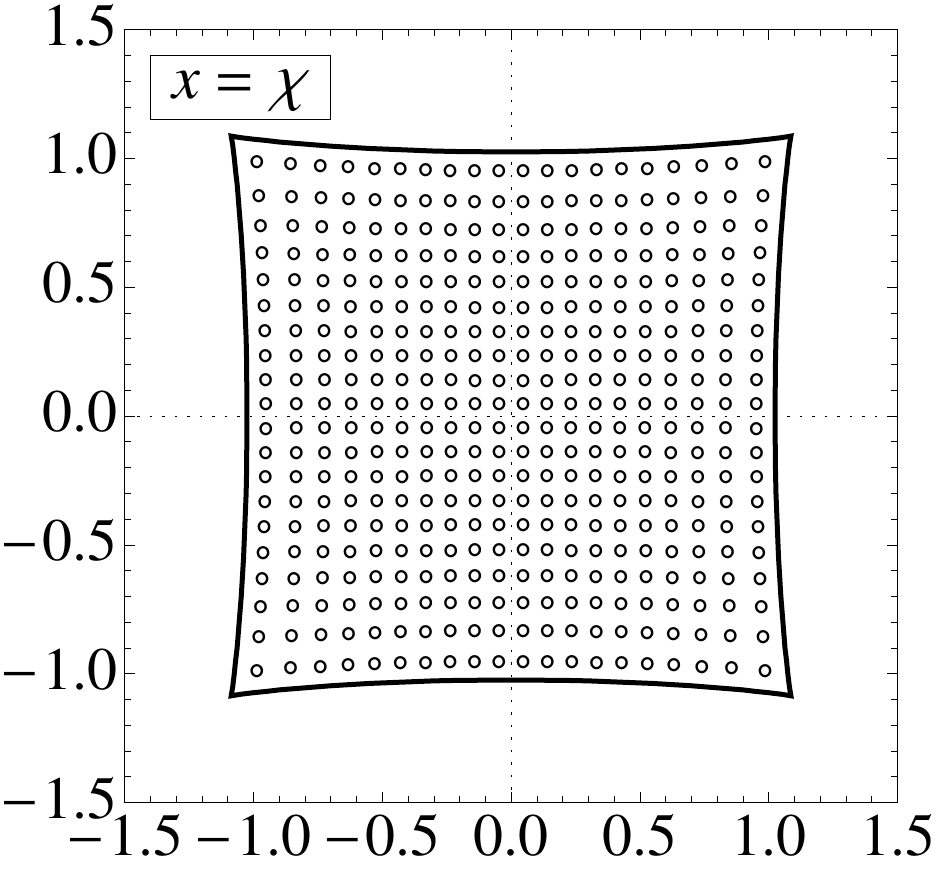}
\caption{\emph{The zeros of $H_{m,n}(m^{1/2}x)=H_{m,n}(n^{1/2}\chi)$ in the 
complex $x$-plane for $(m,n,r)=\left(20,20,1\right)$, along with the boundary 
of the elliptic region $E_r$ that depends only on $r=m/n$.}}
\label{Hmn-zeros2}
\end{figure}
\begin{figure}[h]
\includegraphics[width=2.1in]{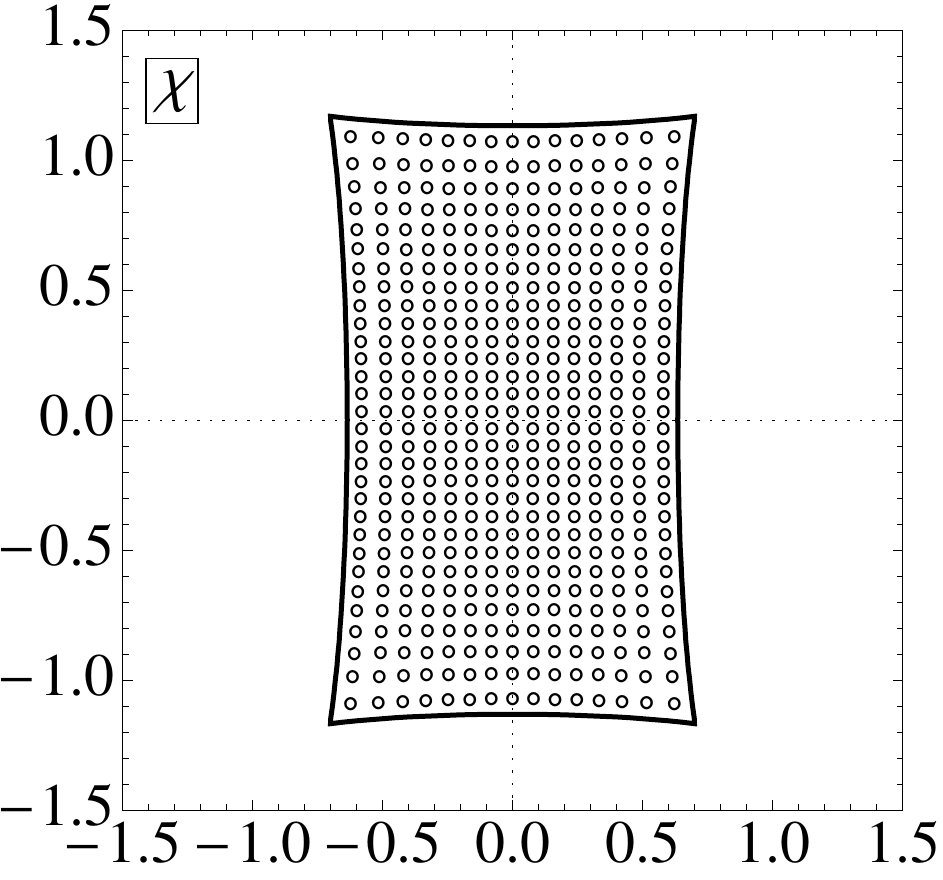}
\includegraphics[width=2.1in]{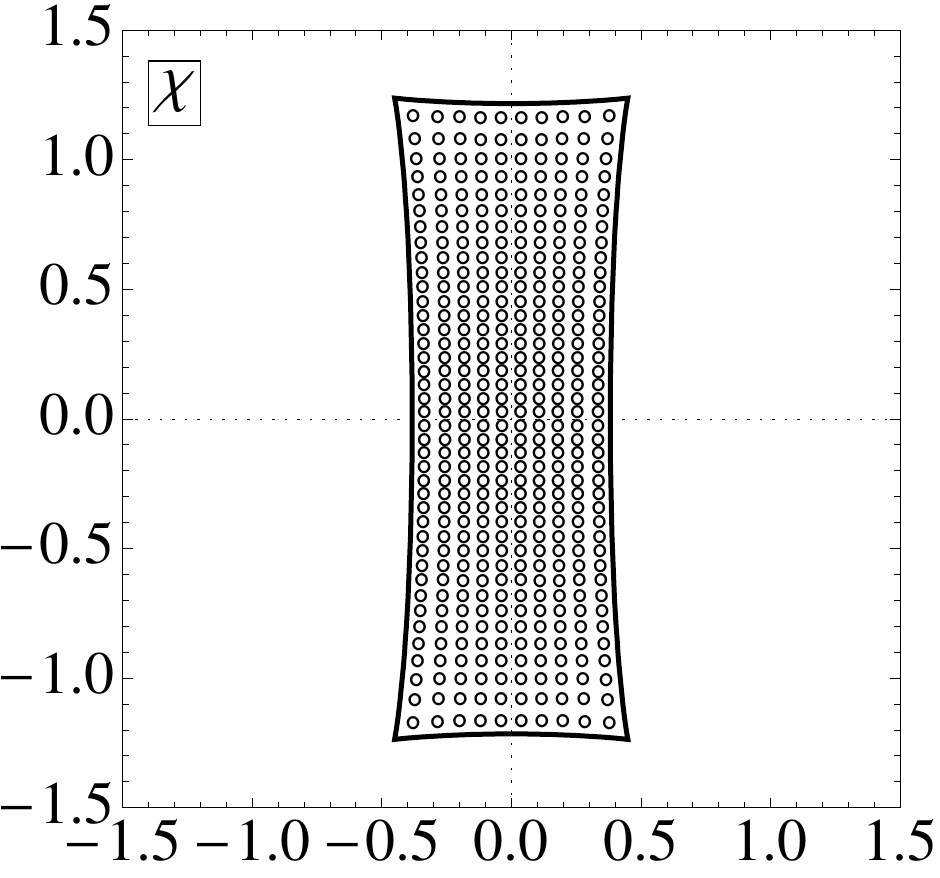}
\includegraphics[width=2.1in]{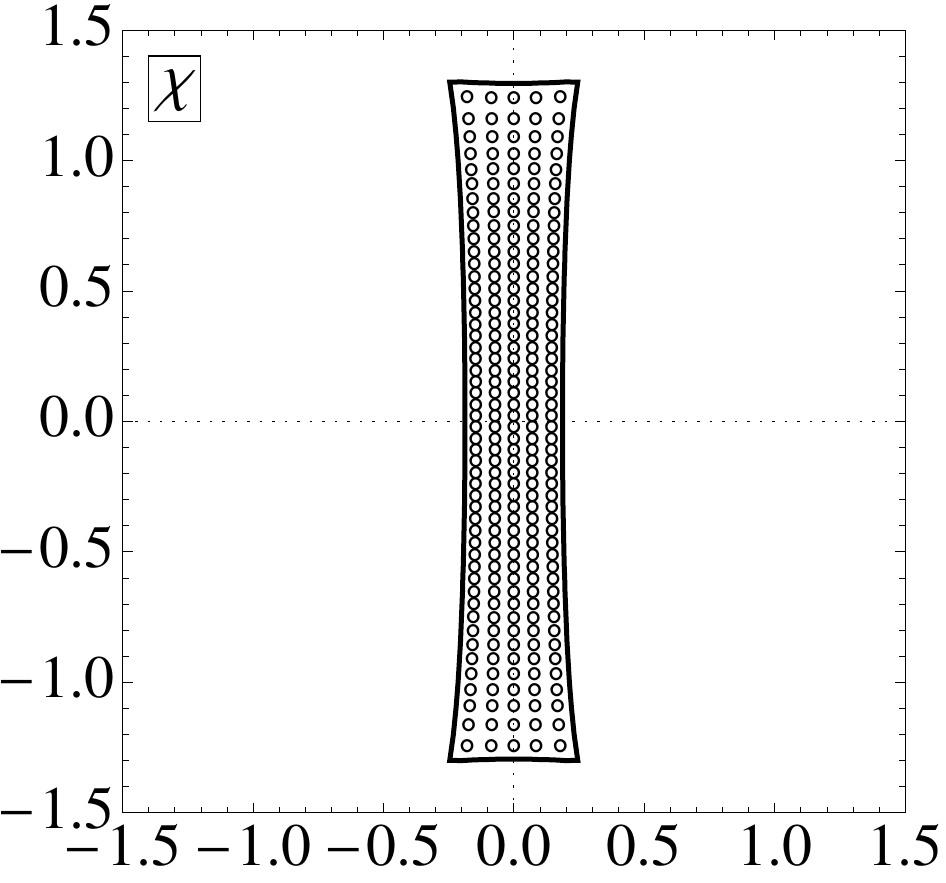}
\caption{\emph{The zeros of $H_{m,n}(n^{1/2}\chi)$ in the complex $\chi$-plane
for 
$(m,n,r)=\left(15,30,\tfrac{1}{2}\right)$ (left), 
$(m,n,r)=\left(10,40,\tfrac{1}{4}\right)$ (center), and 
$(m,n,r)=\left(5,50,\tfrac{1}{10}\right)$ (right), 
along 
with the boundary of the elliptic region $E_r$ that depends only on $r=m/n$.}}
\label{Hmn-zeros3}
\end{figure}

Various other geometric patterns are also seen in the 
plots of poles and zeros of rational solutions of the Painlev\'e-II equation 
and equations in the Painlev\'e-II hierarchy \cite{Clarkson-Mansfield:2003}, 
the Painlev\'e-III equation \cite{Clarkson:2003b}, systems of the symmetric 
Painlev\'e-IV hierarchy \cite{Filipuk:2008}, and the Painlev\'e-V equation 
\cite{Clarkson:2005}, as well as certain Wronskians of Hermite polynomials 
that are extensions of the generalized Hermite polynomials and have connections 
to Young diagrams \cite{FelderHV:2012}.  
Recently, significant progress has been made in understanding the rational 
solutions of the Painlev\'e-II equation, which can be indexed by a single 
integer $m$.  As $m\to\infty$, appropriately scaled zeros and poles of 
these rational functions densely fill a region $T$ bounded by a curvilinear 
triangle.  By analyzing a Riemann-Hilbert problem derived from the 
Garnier-Jimbo-Miwa Lax pair, the large-$m$ behavior of these functions (and 
certain functions arising in the study of critical behavior in the 
semiclassical sine-Gordon equation whose logarithmic derivatives are the 
rational Painlev\'e-II functions \cite{Buckingham:2012}) was rigorously 
calculated with error terms outside $T$ in terms of elementary functions, 
inside $T$ in terms of Riemann theta functions, along edges of $T$ in terms of 
trigonometric functions, and at corners of $T$ in terms of the tritronqu\'ee 
Painlev\'e-I solution \cite{Buckingham:2013,Buckingham:2014}.  In a later 
work, Bertola and Bothner \cite{Bertola:2014} reproduced part of these 
results, in particular the equation for the boundary of $T$ and information 
about the location of the zeros and poles, by deriving a new determinantal 
formula for the squares of the associated Yablonskii-Vorob'ev polynomials 
and applying Riemann-Hilbert analysis to a related family of orthogonal 
polynomials.  Joint with Balogh, they also used their method to obtain the 
boundary of the zero region for the generalized Yablonskii-Vorob'ev polynomials 
associated to the Painlev\'e-II hierarchy \cite{BaloghBB:2016}.
Miller and Sheng \cite{MillerS:2017} have recently shown that, 
for monodromy data corresponding to rational solutions, the Riemann-Hilbert 
problem associated to the Flaschka-Newell Painlev\'e-II Lax pair is equivalent 
to the Riemann-Hilbert problem for orthogonal polynomials studied by 
Bertola and Bothner.

In this work we use the Bertola-Bothner orthogonal polynomial approach to 
analyze the rational Painlev\'e-IV functions associated to the generalized 
Hermite polynomials.  Set
\eq
\begin{matrix}[l]
\alpha_{m,n}^{(I)}:=2m+n+1, & \beta_{m,n}^{(I)}:=-2n^2, & \mathscr{P}_{-1/z}^{(I)} := \{(\alpha_{m,n}^{(I)},\beta_{m,n}^{(I)}):m\geq 0,n\geq 1\}, \\
\alpha_{m,n}^{(II)}:=-(m+2n+1), & \beta_{m,n}^{(II)}:=-2m^2, & \mathscr{P}_{-1/z}^{(II)} := \{(\alpha_{m,n}^{(II)},\beta_{m,n}^{(II)}):m\geq 1,n\geq 0\}, \\
\alpha_{m,n}^{(III)}:=n-m, & \beta_{m,n}^{(III)}:=-2(m+n+1)^2, & \mathscr{P}_{-2z}^{(III)} := \{(\alpha_{m,n}^{(III)},\beta_{m,n}^{(III)}):m,n\in\mathbb{N}_0\},\\
\alpha_{j,k}^\text{(Oka)}:=j, & \beta_{j,k}^\text{(Oka)}:=-2(2k-j+\tfrac{1}{3})^2, & \mathscr{P}_{-2/(3z)}^\text{(Oka)}:=\{(\alpha_{j,k}^\text{(Oka)},\beta_{j,k}^{(Oka)}):j,k\in\mathbb{Z}\}.
\end{matrix}
\endeq
where $\mathbb{N}_0$ denotes the nonnegative integers.  It is known that 
the Painlev\'e-IV equation \eqref{p4} has a rational solution if and only if 
$(\alpha,\beta)\in\mathscr{P}_{-1/z}^{(I)}\cup\mathscr{P}_{-1/z}^{(II)}\cup\mathscr{P}_{-2z}^{(III)}\cup\mathscr{P}_{-2/(3z)}^\text{(Oka)}$.  
Furthermore, for fixed $(\alpha,\beta)$ this rational solution is unique 
when it exists \cite{Murata:1985,Kajiwara:1998,Noumi:1999}.  The 
families of rational solutions to \eqref{p4} 
corresponding to $\mathscr{P}_{-1/z}^{(I)}\cup\mathscr{P}_{-1/z}^{(II)}$, 
$\mathscr{P}_{-2z}^{(III)}$, and $\mathscr{P}_{-2/(3z)}^\text{(Oka)}$ are 
referred to as the $-1/z$, $-2z$, and $-2/(3z)$ hierarchies, respectively.  
The rational functions corresponding to $\mathscr{P}_{-2/(3z)}^\text{(Oka)}$ 
can be constructed from the generalized Okamoto polynomials.  The rational 
solutions of \eqref{p4} for 
$(\alpha,\beta)\in\mathscr{P}_{-1/z}^{(I)}\cup\mathscr{P}_{-1/z}^{(II)}\cup\mathscr{P}_{-2z}^{(III)}$ can be contructed 
from generalized Hermite polynomials.  We will analyze these rational solutions 
in the remainder of this work.

The generalized Hermite polynomials $H_{m,n}(y)$ are defined for $m,n\in\mathbb{N}_0$ 
by the recurrence relations 
\eq
\begin{split}
2mH_{m+1,n}H_{m-1,n} & = H_{m,n}H_{m,n}'' - (H_{m,n}')^2 + 2mH_{m,n}^2, \\
2nH_{m,n+1}H_{m,n-1} & = -H_{m,n}H_{m,n}'' + (H_{m,n}')^2 + 2nH_{m,n}^2
\end{split}
\endeq
and the initial conditions
\eq
H_{0,0}=H_{1,0}=H_{0,1}=1, \quad H_{1,1}=2y.
\endeq 
The name arises from the fact that 
\eq
H_{m,1}(y) = H_m(y) \quad \text{and} \quad H_{1,n}(y) = i^{-n}H_n(iy),
\endeq
where for $m\in\mathbb{N}_0$, $H_m(y)$ is the standard Hermite polynomial 
defined by the generating function 
\eq
e^{2sy-s^2} = \sum_{n=0}^\infty\frac{H_n(y)s^n}{n!}.
\label{Hermite-generating-fun}
\endeq
The generalized Hermite polynomials also have the symmetry 
\eq
H_{m,n}(iy) = i^{mn}H_{n,m}(y).
\label{Hmn-symmetry}
\endeq
While we will not use them, it is interesting to note that their zeros satisfy 
various sum relations that generalize the Stieltjes relations for the zeros 
of Hermite polynomials \cite{KudryashovD:2007}.
The connection to the rational Painlev\'e-IV functions is that 
\eq
w_{m,n}^{(I)}(y) := \frac{\text{d}}{\text{d}y}\log\left(\frac{H_{m+1,n}(y)}{H_{m,n}(y)}\right)
\label{w-mn-I}
\endeq
solves the Painlev\'e-IV equation \eqref{p4} with parameters 
$(\alpha,\beta)=(\alpha_{m,n}^{(I)},\beta_{m,n}^{(I)})$,
\eq
w_{m,n}^{(II)}(y) := -\frac{\text{d}}{\text{d}y}\log\left(\frac{H_{m,n+1}(y)}{H_{m,n}(y)}\right)
\label{w-mn-II}
\endeq
solves \eqref{p4} with parameters 
$(\alpha,\beta)=(\alpha_{m,n}^{(II)},\beta_{m,n}^{(II)})$, and
\eq
w_{m,n}^{(III)}(y) := -2y+\frac{\text{d}}{\text{d}y}\log\left(\frac{H_{m,n+1}(y)}{H_{m+1,n}(y)}\right) = -2y - w_{m,n}^{(I)}(y) - w_{m,n}^{(II)}(y)
\label{w-mn-III}
\endeq
solves \eqref{p4} for 
$(\alpha,\beta)=(\alpha_{m,n}^{(III)},\beta_{m,n}^{(III)})$.  

\subsection{Outline and results}
Our starting point is the known identity \eqref{taumn-ito-Hmn} expressing the 
generalized Hermite polynomial $H_{m,n}$ in terms of a Hankel determinant of 
Hermite polynomials.  In Lemma \ref{lemma-switch-dets} we rewrite this as a 
Hankel determinant of certain moments (defined in \eqref{mu-def}) of a measure 
supported on the unit circle.  This establishes a connection to the associated 
orthogonal polynomials on the unit circle (see \eqref{orth-poly-def}), and we 
write the rational Painlev\'e-IV functions in terms of these orthogonal 
polynomials and their normalization constants in \eqref{wI-ito-psi} and  
\eqref{wII-ito-psi} (see also \eqref{w-mn-III}).  We write down the standard 
Riemann-Hilbert problem associated to the orthogonal polynomials, and show 
how to directly extract the rational Painlev\'e-IV functions from the 
Riemann-Hilbert problem in Lemmas \ref{wI-ito-N-lemma} and 
\ref{wII-ito-N-lemma}.  

In \S\ref{sec-boundary} we compute the so-called $g$-function, a 
standard tool used to regularize the Riemann-Hilbert problem and turn 
oscillatory jumps into constants.  By studying topological changes in the level 
lines of the related phase function $\varphi$, we derive an explicit form of 
the boundary curve, which we now state.  Fix $r\in[1,\infty)$.  Let $x_c(r)$ be 
the unique value of $x$ satisfying 
\eq
\label{xc-polynomial}
r^4x^8-24r^2(r^2+r+1)x^4+32r(2r^3+3r^2-3r-2)x^2 - 48(r^2+r+1)^2 = 0
\endeq
with $\Re(x_c)>0$ and $\Im(x_c)>0$.  The four points 
$\{\pm x_c,\pm\overline{x_c}\}$ will be the four corners of the boundary of the 
elliptic region as well as the four branch points of a function $Q$ we will 
define shortly.  While it is 
possible to solve \eqref{xc-polynomial} exactly since it is a quartic in 
$x^2$, we simply note that for $r=1$ the corner points are the four $x$ 
values satisfying 
\eq
x^4 = 36-24\sqrt{3} \quad (r=1),
\endeq
so that $x_c(1)\approx 1.086+1.086i$ (compare Figure \ref{Hmn-zeros2}).  Now 
define $Q(x;r)$ as the unique function satisfying 
\eq
\label{gen0-Q-quartic}
3(1+r)^2Q^4+8(1+r)r^{1/2}xQ^3+4(r-1+rx^2)Q^2-4=0
\endeq
such that $Q(x;r)=-x+\mathcal{O}(x^{-2})$ as $x\to+\infty$ and cut as shown 
in Figure \ref{fig-Q-branches}.  
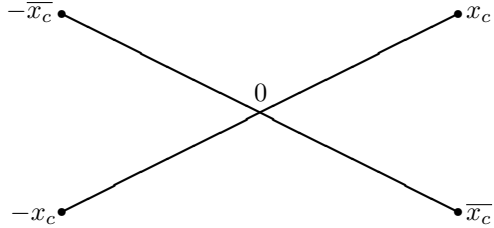
\begin{figure}[h]
\setlength{\unitlength}{1.5pt}
\begin{center}
\begin{picture}(100,50)(-50,-25)
\thicklines
\put(-50,-25){\line(2,1){100}}
\put(-50,25){\line(2,-1){100}}
\put(50,25){\circle*{2}}
\put(52,24){$x_c$}
\put(50,-25){\circle*{2}}
\put(52,-27){$\overline{x_c}$}
\put(-50,25){\circle*{2}}
\put(-64,24){$-\overline{x_c}$}
\put(-50,-25){\circle*{2}}
\put(-63,-27){$-x_c$}
\put(-1.5,3){$0$}
\end{picture}
\end{center}
\caption{\label{fig-Q-branches} The branch cuts for $Q(x;r)$.}
\end{figure}
Also define
\eq
S(x;r) := (1+r)Q(x;r)^3+2r^{1/2}xQ(x;r)^2.
\label{gen0-S-ito-Q}
\endeq
Then let $a(x;r)$ and $b(x;r)$ be the two values of $z$ satisfying 
\eq
\label{ab-def}
z^2-S(x;r)z+Q(x;r)^2 = 0.
\endeq
For definiteness we choose $\Im(a)<\Im(b)$ for 
$\arg(\overline{x_c})\leq \arg(x) \leq \arg(x_c)$ and 
$\Re(a)>\Re(b)$ for $\arg(x_c)\leq \arg(x) \leq \arg(-\overline{x_c})$.  
Throughout we restrict our analysis to 
$\arg(\overline{x_c}) \leq \arg(x) \leq \arg(-\overline{x_c})$, which is 
sufficient due to the symmetry \eqref{Hmn-symmetry}.  
We now specify a contour $\Sigma$ connecting $a$ and $b$.  
Define
\eq
\label{Rtilde-def}
\widetilde{R}(z;x,r):=(z^2-S(x;r)z+Q(x;r)^2)^{1/2}
\endeq
with $\widetilde{R}(z)=z+\mathcal{O}(1)$ as $z\to\infty$ and branch cut 
chosen as the straight line segment between $a$ and $b$.  Now define 
$\widetilde{\varphi}(z;x,r)\equiv\widetilde{\varphi}(z)$ by  
\eq
\label{phitilde-def}
\begin{split}
\widetilde{\varphi}(z) := & \frac{\widetilde{R}(z)}{Qz^2} + \left(1+r-\frac{S}{2Q^3}\right)\frac{\widetilde{R}(z)}{z} - (1+r)\log(2z+2\widetilde{R}(z)-S) \\ 
  & + (r-1)\log\left(\frac{2Q\widetilde{R}(z)-Sz+2Q^2}{z}\right) + \log(S^2-4Q^2) - (1+r)i\pi.
\end{split}
\endeq
Here all logarithms are chosen with principal branches (as we will only need 
the real part of $\widetilde{\varphi}$ the particular choice is unimportant).  
There is a level 
line of $\Re(\widetilde{\varphi}(z))$ connecting $a$ and $b$ traveling in the 
clockwise direction around the origin;  we call this bounded contour $\Sigma$.  
Now set $R(z;x,r)$ to be the function satisfying 
\eq
\label{R-def}
R(z;x,r)^2 = z^2-S(x;r)z+Q(x;r)^2
\endeq
that is analytic for $z\notin\Sigma$ and satisfies $R(z)=z+\mathcal{O}(1)$ as 
$z\to\infty$.  
Note we have the useful relations 
\eq
S=a+b, \quad Q=R(0), \quad Q^2=ab.
\endeq
Also define 
\eq
R_c(x;r) \equiv R_c := -\frac{((1+r)^2Q^4+2(1+r)QS+4)^{1/2}}{(1+r)Q},
\endeq
with the choice of branch inherited from $R(z)$.  Then we have the following 
definitions of the elliptic region in which the zeros and poles of the 
rational Painlev\'e-IV functions lie (at least asymptotically) and the 
complementary genus-zero region.  See Figures \ref{Hmn-zeros1} and 
\ref{Hmn-zeros2}.  
\begin{definition}
\label{genus-zero-def}
Fix $r\in[1,\infty)$.  Then the \emph{elliptic region} $E_r$ is the bounded 
domain of the complex plane defined by the curves
\eq
\label{boundary-curve}
\begin{split}
\Re\bigg\{ & \frac{(1+r)r^{1/2}x}{2} R_c - (1+r)\log\left(2R_c-\frac{4}{(1+r)Q}-S\right) \\ 
  & + (r-1)\log\left((1+r)Q^3+(1+r)Q^2R_c+S\right) + \log\left(S^2-4Q^2\right) \bigg\} = 0.
\end{split}
\endeq
The \emph{genus-zero region} is the complement of the closure of the elliptic 
region.  
\end{definition}

In \S\ref{sec-rhp-analysis} we carry out the Deift-Zhou nonlinear 
steepest-descent analysis \cite{DeiftZ:1993}  of the Riemann-Hilbert problem 
for the orthogonal polynomials.   This consists of several standard steps:
\begin{enumerate}
\item Conjugating the jump matrices by a matrix involving the $g$-function, 
which identifies the contours that will contribute to the leading-order 
solution.
\item Opening \emph{lenses} so all jumps are constants or decaying to 
the identity as $n\to\infty$.  
\item Solving the \emph{model problem} obtained by disregarding jumps close 
to the identity.
\item Controlling the errors and showing that the model solution is a good 
approximation to the exact problem.
\end{enumerate}
Following this procedure, we obtain the following asymptotic formulas for the 
rational Painlev\'e-IV functions valid in the genus-zero region.
\begin{theorem}
\label{thm-wI}
Fix $p,q\in\mathbb{N}$ (the positive integers) with $p\geq q$ and set 
$r:=p/q$.  Fix $x$ in the genus-zero region as defined in Definition 
\ref{genus-zero-def}.  Then as $m,n\to\infty$ along the sequence 
$\{m,n\}=\{jp,jq\}$ for $j\in\mathbb{N}$ we have 
\eq
\label{wI-ito-Q-and-S}
\frac{1}{n^{1/2}}w_{m,n}^{(I)}(m^{1/2}x) = -\frac{1}{Q(x,r)} - \frac{S(x,r)}{2Q(x,r)^2} + \mathcal{O}\left(\frac{1}{n}\right).
\endeq
\end{theorem}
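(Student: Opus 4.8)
The plan is to carry out the four enumerated steps of the Deift--Zhou analysis and then read the stated combination off the large-$z$ behavior of the resulting parametrix. The starting point is the extraction formula of Lemma \ref{wI-ito-N-lemma}, which expresses $w_{m,n}^{(I)}(m^{1/2}x)$ through the first few coefficients in the $z\to\infty$ expansion of the solution of the orthogonal-polynomial Riemann--Hilbert problem (and, if needed, its value at a distinguished point). Since $w_{m,n}^{(I)}$ is a logarithmic derivative of a ratio of consecutive generalized Hermite polynomials, Lemma \ref{wI-ito-N-lemma} renders it an explicit algebraic expression in these coefficients, so the whole computation reduces to evaluating them to leading order as $n\to\infty$ with $m=rn$. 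Along the sequence $\{m,n\}=\{jp,jq\}$ the ratio $r$ is exactly fixed, so the $g$-function of \S\ref{sec-boundary} (which depends only on $r$) is the same at every $j$ and $n$ is the genuine large parameter.

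First I would introduce that $g$-function and make the transformation normalizing the solution to the identity at infinity while converting the oscillatory jump on the relevant arc into the constant jump supported on the band $\Sigma$ from $a$ to $b$. Because $\Re(\widetilde{\varphi})$ vanishes on $\Sigma$ by construction and, by the level-line analysis that produced the boundary curve \eqref{boundary-curve}, has the correct sign on either side throughout the genus-zero region, the off-band jumps factor and the lenses open so that every jump away from $\Sigma$ decays to the identity at an exponential rate in $n$. Discarding those jumps defines a genus-zero global parametrix $\dot{N}(z)$ whose only jump is the constant matrix on $\Sigma$; since we are outside the elliptic region, this model problem is solved explicitly in terms of a Szeg\H{o}-type function and the algebraic function $R(z)$ of \eqref{R-def}, with no Riemann theta functions required. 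Here the endpoints satisfy $S=a+b$ and $Q^2=ab$, and $R(0)=Q$.

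I would then expand $\dot{N}(z)$ as $z\to\infty$ and combine its coefficients with the $g$-function contribution accumulated in the normalizing transformation. These coefficients are rational in $Q$ and $S$ through the relations $R(0)=Q$ and $R(z)=z-\tfrac{1}{2}S+\mathcal{O}(z^{-1})$, and substituting them into the extraction formula of Lemma \ref{wI-ito-N-lemma} should collapse to exactly $-1/Q-S/(2Q^2)$, with any auxiliary derivative relations supplied by differentiating the defining quartic \eqref{gen0-Q-quartic}. To close the argument I would build Airy local parametrices at the soft edges $a$ and $b$ (the square-root branch points of $R$), match them to $\dot{N}$, and invoke the standard small-norm estimate to show that the error matrix $\mathcal{E}$ obeys $\mathcal{E}=I+\mathcal{O}(1/n)$ uniformly for $x$ in compact subsets of the genus-zero region, which propagates to the $\mathcal{O}(1/n)$ error in \eqref{wI-ito-Q-and-S}.

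The main obstacle is verifying that the genus-zero ansatz is globally valid on the claimed region: one must show that the endpoints $a,b$ determined through \eqref{gen0-Q-quartic}--\eqref{ab-def}, together with this $g$-function, really do force $\Re(\widetilde{\varphi})\equiv 0$ on $\Sigma$ and $\Re(\widetilde{\varphi})$ of the correct, \emph{uniform} sign on both lens boundaries for every $x$ in the genus-zero region. This is exactly where the topology of the level lines of $\Re(\widetilde{\varphi})$ must not change, and it is the vanishing \eqref{boundary-curve} that signals the loss of genus zero at the edge of $E_r$. Controlling this sign structure uniformly up to (but not across) the boundary, and checking that the level-line contour $\Sigma$ can be chosen consistently with the prescribed branch cuts of $Q$ and $R$, is the delicate step; once it is secured, the passage from $\dot{N}(z)$ to \eqref{wI-ito-Q-and-S} is a routine algebraic computation.
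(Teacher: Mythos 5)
Your outline follows the paper's own route---Lemma \ref{wI-ito-N-lemma} as the extraction device, then the $g$-function transformation, lens opening, genus-zero outer parametrix, Airy parametrices at $a$ and $b$, and a small-norm argument (the paper's Lemma \ref{VS-is-small-lemma}, which is precisely the level-line topology verification you correctly single out as the delicate step)---so in structure it is the same proof. However, there is one concrete misstep: Lemma \ref{wI-ito-N-lemma} does \emph{not} take as input the $z\to\infty$ expansion of ${\bf N}_{m,n}$; it takes the Taylor coefficients ${\bf N}_0$ and ${\bf N}_1$ of the expansion \eqref{N-expansion} about $z=0$, because $w_{m,n}^{(I)}$ is extracted from $\Psi_n^{(m-n+1)}(0;x)=[{\bf N}_{m,n}(0;x)]_{11}$, a quantity local to the origin. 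Consequently your step ``expand $\dot N(z)$ as $z\to\infty$ and substitute into the extraction formula'' cannot be carried out as written: the coefficient ${\bf N}_{-1}$ at infinity only encodes the normalization constant $\mathcal{H}_n^{(m-n+1)}$, which is the input to the $w^{(II)}$ formula of Lemma \ref{wII-ito-N-lemma}, not to \eqref{wI-ito-N}.

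The repair is exactly what the paper does: after unravelling the transformations, expand $\gamma(z)=\left(\frac{z-a}{z-b}\right)^{1/4}$ and $g(z)$ about $z=0$, express $[{\bf N}_0]_{jk}$, $[{\bf N}_1]_{11}$, and $[{\bf N}_1]_{21}$ in terms of $\gamma_0=(a/b)^{1/4}$, $\gamma_1=\frac{a-b}{4ab}(a/b)^{1/4}$, $g_0$, $g_1$, and $\ell$, and substitute into \eqref{wI-ito-N}; the dependence on $g_0$, $g_1$, and $\ell$ cancels identically (so, contrary to your plan, no $g$-function contribution and no differentiation of the quartic \eqref{gen0-Q-quartic} is needed in this final algebra), and the identities $S=a+b$, $Q=-(ab)^{1/2}$ then yield \eqref{wI-ito-Q-and-S}. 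With the expansion taken at the origin rather than at infinity, your argument coincides with the paper's proof.
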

Theorem \ref{thm-wI} is illustrated in Figure \ref{wI-approx-plots}.  
\begin{figure}[h]
\includegraphics[width=2.1in]{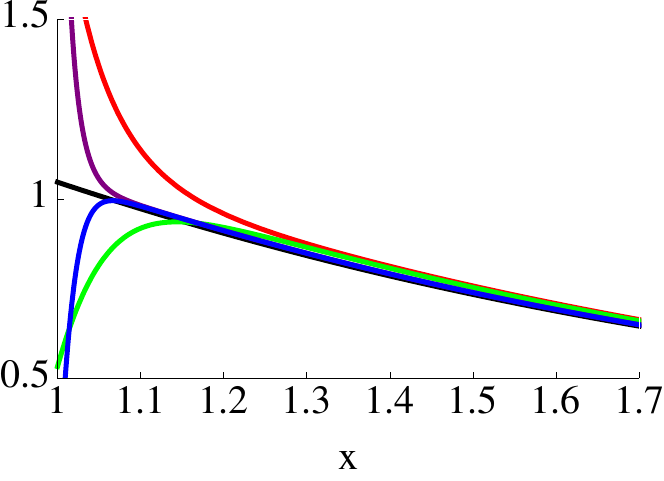}
\includegraphics[width=2.1in]{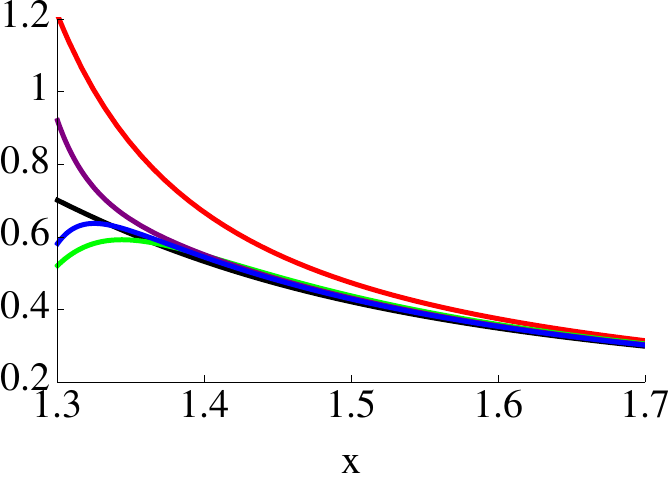}
\caption{Comparison of the rational Painlev\'e-IV functions of the first type 
with the limiting genus-zero approximation on the positive real $x$-axis 
outside the root region for $r=1$ (left) and $r=10$ (right).  {\it Left:}  
$-Q(x,1)^{-1}-\frac{1}{2}S(x,1)Q(x,1)^{-2}$ (black) plotted against 
$n^{-1/2}w_{m,n}^{(I)}(m^{1/2}x)$ for $m=n=5$ (red), $m=n=6$ (green), 
$m=n=21$ (purple), and $m=n=22$ (blue).  For $r=1$ the boundary of the root 
region intersects the positive real axis at $x\approx 1.0253$.  {\it Right:}  
$-Q(x,10)^{-1}-\frac{1}{2}S(x,10)Q(x,10)^{-2}$ (black) plotted against 
$n^{-1/2}w_{m,n}^{(I)}(m^{1/2}x)$ for $m=10$, $n=1$ (red), $m=20$, $n=2$ 
(green), $m=30$, $n=3$ (purple), and $m=40$, $n=4$ (blue).  For $r=10$ the 
boundary of the root region intersects the positive real axis at 
$x\approx 1.2953$.}
\label{wI-approx-plots}
\end{figure}
\begin{theorem}
\label{thm-wII}
Fix $p,q\in\mathbb{N}$ with $p\geq q$ and set 
$r:=p/q$.  Fix $x$ in the genus-zero region as defined in Definition 
\ref{genus-zero-def}.  Then as $m,n\to\infty$ along the sequence 
$\{m,n\}=\{jp,jq\}$ for $j\in\mathbb{N}$ we have 
\eq
\label{wII-ito-Q-and-S}
\frac{1}{n^{1/2}}w_{m,n}^{(II)}(m^{1/2}x) = \frac{1}{Q(x,r)} - \frac{S(x,r)}{2Q(x,r)^2} + \mathcal{O}\left(\frac{1}{n}\right).
\endeq
\end{theorem}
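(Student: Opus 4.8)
The plan is to mirror the proof of Theorem~\ref{thm-wI} line for line. By the time this statement is reached, the Deift--Zhou analysis of \S\ref{sec-rhp-analysis} has already supplied a uniform approximation of the solution of the orthogonal-polynomial Riemann--Hilbert problem by the genus-zero model solution throughout the region of Definition~\ref{genus-zero-def}, with error $\mathcal{O}(n^{-1})$. The only ingredient not yet exploited for $w^{(II)}$ is its own extraction formula, Lemma~\ref{wII-ito-N-lemma}, which expresses $n^{-1/2}w_{m,n}^{(II)}(m^{1/2}x)$ as a fixed rational combination of a few entries of the Riemann--Hilbert solution, read off from the expansions of that solution at $z=\infty$ and $z=0$. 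The proof thus reduces to inserting the model solution into this formula and simplifying; no new analytic estimate is needed.

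Concretely, I would extract from the $z\to\infty$ and $z\to 0$ expansions of the model solution the same coefficient data used in Theorem~\ref{thm-wI}, all of which are built from the scalar function $R(z)$ of \eqref{R-def} and its Szeg\H{o}-type companion, and then collapse the result using $S=a+b$, $Q=R(0)$, and $Q^2=ab$. The one structural difference from $w^{(I)}$ is that $w_{m,n}^{(II)}$ in \eqref{w-mn-II} records a unit increment in the second index $n$, through $H_{m,n+1}/H_{m,n}$, rather than in $m$ through $H_{m+1,n}/H_{m,n}$ as in \eqref{w-mn-I}. In the orthogonal-polynomial dictionary these two increments enter the governing coefficient with opposite sign, which flips the term that evaluates to $\pm Q^{-1}$ while leaving the symmetric combination producing $-S/(2Q^2)$ untouched. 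This is precisely the difference between \eqref{wI-ito-Q-and-S} and \eqref{wII-ito-Q-and-S}.

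As an independent check I would confirm consistency with the discrete symmetry \eqref{Hmn-symmetry}. Taking a logarithmic derivative there and comparing \eqref{w-mn-I} with \eqref{w-mn-II} yields $w_{m,n}^{(II)}(iy)=i\,w_{n,m}^{(I)}(y)$, equivalently
\eq
\frac{1}{n^{1/2}}w_{m,n}^{(II)}(m^{1/2}x)=i\,r^{1/2}\cdot\frac{1}{m^{1/2}}w_{n,m}^{(I)}\big(n^{1/2}\xi\big),\qquad \xi:=-i\,r^{1/2}x,
\endeq
so that the spectral variable is rotated and the ratio is inverted, $r\mapsto 1/r$. From the defining quartic \eqref{gen0-Q-quartic} one verifies directly that $Q(\xi;1/r)=-i\,r^{1/2}Q(x;r)$, whence \eqref{gen0-S-ito-Q} gives $S(\xi;1/r)=i\,r^{1/2}S(x;r)$. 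Substituting these two rules into the right-hand side of Theorem~\ref{thm-wI} applied with ratio $1/r$ sends $-Q^{-1}$ to $+Q^{-1}$ and returns $-S/(2Q^2)$ unchanged, reproducing \eqref{wII-ito-Q-and-S}; this route is fully self-contained once Theorem~\ref{thm-wI} is available at the inverted ratio (in particular at $r=1$ it is immediate).

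The main obstacle is bookkeeping rather than analysis: I must confirm that the $\mathcal{O}(n^{-1})$ control inherited from the model solution is not degraded by the discrete differencing implicit in Lemma~\ref{wII-ito-N-lemma}, i.e.\ that differences of Riemann--Hilbert data at neighboring values of the degree do not inflate the error to $\mathcal{O}(1)$. As in Theorem~\ref{thm-wI}, I would control this by using the smooth dependence of the model solution on the continuous external parameters, so that the relevant discrete difference is estimated by a derivative of an $\mathcal{O}(n^{-1})$ quantity and stays $\mathcal{O}(n^{-1})$. The remaining work, reducing the extracted combination to $Q^{-1}-S/(2Q^2)$, is routine given the relations among $Q$, $S$, and $R$.
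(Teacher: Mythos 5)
Your proposal follows the paper's own proof essentially line for line: the paper likewise feeds the genus-zero model solution's expansions at $z=0$ and $z=\infty$ into Lemma \ref{wII-ito-N-lemma}, computes $[{\bf N}_0]_{11}[{\bf N}_0]_{12}/[{\bf N}_{-1}]_{12}=1/Q$ so that adding $2/Q$ flips the $-1/Q$ of \eqref{wI-ito-Q-and-S} to $+1/Q$ while leaving $-S/(2Q^2)$ untouched, and absorbs the index shift $m\to m+1$ inherent in that lemma by the smooth dependence of $Q(x,r)$ and $S(x,r)$ on $r$, exactly as you outline. Your supplementary consistency check via the symmetry \eqref{Hmn-symmetry} is algebraically sound (the transformation rules for $Q$ and $S$ under $r\mapsto 1/r$ do hold), but as you yourself note it is only self-contained where Theorem \ref{thm-wI} is available at the inverted ratio, i.e.\ at $r=1$, so it cannot replace the main argument.
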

Theorem \ref{thm-wII} is illustrated in Figure \ref{wII-approx-plots}.  
\begin{figure}[h]
\includegraphics[width=2.1in]{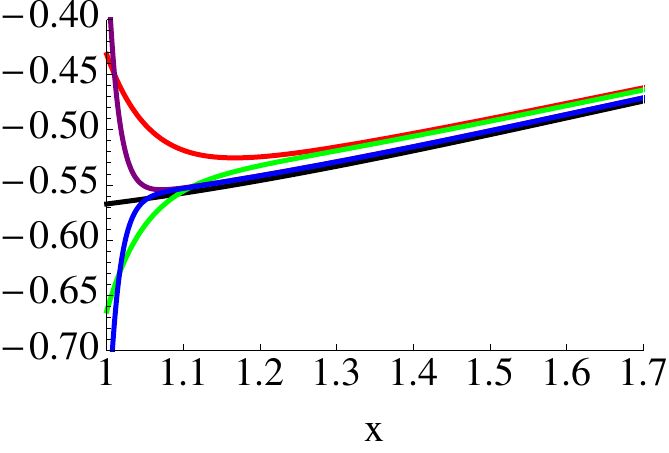}
\includegraphics[width=2.1in]{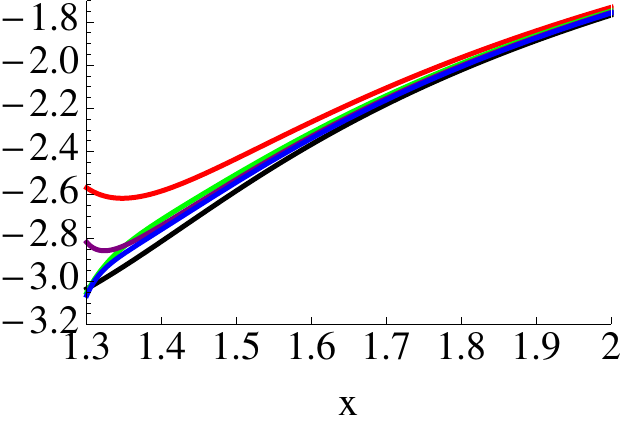}
\caption{Comparison of the rational Painlev\'e-IV functions of the second 
type with the limiting genus-zero approximation on the positive real $x$-axis 
outside the root region for $r=1$ (left) and $r=10$ (right).  {\it Left:}  
$Q(x,1)^{-1}-\frac{1}{2}S(x,1)Q(x,1)^{-2}$ (black) plotted against 
$n^{-1/2}w_{m,n}^{(II)}(m^{1/2}x)$ for $m=n=5$ (red), $m=n=6$ (green), 
$m=n=21$ (purple), and $m=n=22$ (blue).  For $r=1$ the boundary of the root 
region intersects the positive real axis at $x\approx 1.0253$.  {\it Right:}  
$Q(x,10)^{-1}-\frac{1}{2}S(x,10)Q(x,10)^{-2}$ (black) plotted against 
$n^{-1/2}w_{m,n}^{(II)}(m^{1/2}x)$ for $m=10$, $n=1$ (red), $m=20$, $n=2$ 
(green), $m=30$, $n=3$ (purple), and $m=40$, $n=4$ (blue).  For $r=10$ the 
boundary of the root region intersects the positive real axis at 
$x\approx 1.2953$.}
\label{wII-approx-plots}
\end{figure}
Finally, combining these two theorems with \eqref{w-mn-III} immediately gives 
the following. 
\begin{theorem}
\label{thm-wIII}
Fix $p,q\in\mathbb{N}$ with $p\geq q$ and set 
$r:=p/q$.  Fix $x$ in the genus-zero region as defined in Definition 
\ref{genus-zero-def}.  Then as $m,n\to\infty$ along the sequence 
$\{m,n\}=\{jp,jq\}$ for $j\in\mathbb{N}$ we have 
\eq
\frac{1}{n^{1/2}}w_{m,n}^{(III)}(m^{1/2}x) = -2r^{1/2}x + \frac{S(x,r)}{Q(x,r)^2} + \mathcal{O}\left(\frac{1}{n}\right).
\endeq
\end{theorem}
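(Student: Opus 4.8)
The plan is to obtain Theorem \ref{thm-wIII} as an immediate corollary of Theorems \ref{thm-wI} and \ref{thm-wII}, with no additional Riemann--Hilbert analysis: by \eqref{w-mn-III} the third rational Painlev\'e-IV function is an \emph{exact} linear combination of the first two, so its asymptotics follow from the two expansions already established purely by linearity. The starting point is the identity $w_{m,n}^{(III)}(y) = -2y - w_{m,n}^{(I)}(y) - w_{m,n}^{(II)}(y)$ recorded in \eqref{w-mn-III}.

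First I would set $y = m^{1/2}x$ and divide through by $n^{1/2}$, giving
\[
\frac{1}{n^{1/2}}w_{m,n}^{(III)}(m^{1/2}x) = -2\,\frac{m^{1/2}}{n^{1/2}}\,x - \frac{1}{n^{1/2}}w_{m,n}^{(I)}(m^{1/2}x) - \frac{1}{n^{1/2}}w_{m,n}^{(II)}(m^{1/2}x).
\]
Along the sequence $\{m,n\}=\{jp,jq\}$ with $r=p/q$ one has $m/n=r$ exactly for every $j$, so the scalar coefficient is $m^{1/2}/n^{1/2}=r^{1/2}$ with no error incurred, and the first term is therefore precisely $-2r^{1/2}x$. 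The key step is then to insert the expansions \eqref{wI-ito-Q-and-S} and \eqref{wII-ito-Q-and-S}. Negating and summing them, the $\pm Q^{-1}$ contributions cancel while the two $-\tfrac{1}{2}S Q^{-2}$ contributions reinforce to give $+S Q^{-2}$, and the two $\mathcal{O}(n^{-1})$ remainders combine into a single $\mathcal{O}(n^{-1})$. This yields exactly $-2r^{1/2}x + S(x,r)/Q(x,r)^2 + \mathcal{O}(n^{-1})$, as claimed.

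Since the result is a purely algebraic consequence of the two preceding theorems, there is no substantive analytic obstacle. The only care required is bookkeeping of signs in the linear combination and confirming that the two input expansions apply simultaneously at the same point: both Theorems \ref{thm-wI} and \ref{thm-wII} are stated for $x$ in the common genus-zero region of Definition \ref{genus-zero-def} and along the same sequence $\{m,n\}=\{jp,jq\}$, so their sum is valid under precisely the hypotheses of Theorem \ref{thm-wIII}.
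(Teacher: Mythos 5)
Your proposal is correct and matches the paper's own argument exactly: the paper proves Theorem \ref{thm-wIII} by noting it "follows immediately from Theorems \ref{thm-wI} and \ref{thm-wII}" via the identity \eqref{w-mn-III}, which is precisely your linear-combination computation (the $\pm Q^{-1}$ terms cancel, the $-\tfrac{1}{2}SQ^{-2}$ terms add, and $m^{1/2}/n^{1/2}=r^{1/2}$ exactly along the sequence). Nothing further is needed.
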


\begin{remark}
To understand the behavior of the rational Painlev\'e-IV functions as 
$m,n\to\infty$ with $r=m/n$ fixed, it is sufficient to consider the case 
$r\geq 1$ due to the symmetry \eqref{Hmn-symmetry}.  In the case $0<r<1$ (see 
Figure \ref{Hmn-zeros3}) the natural variable is $\chi:=n^{-1/2}y$, since the 
zeros of $H_{m,n}(n^{1/2}\chi)$ are bounded in the $\chi$ plane as 
$n\to\infty$.  
\end{remark}

\subsection{A comment on the literature}
Before beginning our analysis we make a few remarks regarding a recent paper 
by Novokshenov and Schelkonogov \cite{Novokshenov:2015} that 
concerns some of the same questions we address here.  In particular, they 
are interested in the distribution of the zeros of $w_{n,n}^{(III)}$ for 
large $n$.  The proposed strategy is intriguing:  determine a Riemann-Hilbert 
problem for $w_{0,0}^{(III)}$ and then apply Schlesinger/B\"acklund 
transformations to obtain Riemann-Hilbert problems for $w_{n,n}^{(III)}$.  
Unfortunately, \cite[Equation (22)]{Novokshenov:2015} expressing 
$w_{n,n}^{(III)}$ (or, in their notation, $u_{n,n}$) in terms of the solution 
of the Riemann-Hilbert problem in \cite[Equation (21)]{Novokshenov:2015} is not 
correct.  This means that the subsequent asymptotic results for the rational 
Painlev\'e-IV functions are also incorrect, including 
\cite[Equation (37)]{Novokshenov:2015} and 
\cite[Equation (38)]{Novokshenov:2015} describing the asymptotic behavior of 
$w_{n,n}^{(III)}$ and \cite[Equation (41)]{Novokshenov:2015} for the location 
of the zeros.  In fact, it is not possible to extract any information about 
$w_{n,n}^{(III)}$ from the Riemann-Hilbert problem in 
\cite[Equation (21)]{Novokshenov:2015}.  In their notation, this 
problem is to find a matrix $Y(\xi)$ analytic for $\xi\notin\mathbb{R}$ 
satisfying
\eq
Y_+(\xi)=Y_-(\xi)\bbm 1 & 2\pi i e^{-n(\xi^2-x^2)} \\ 0 & 1 \ebm \text{ for } \xi\in\mathbb{R};  \quad Y(\xi)=(\mathbb{I}+\mathcal{O}(\xi^{-1}))\bbm \xi^{2n} & 0 \\ 0 & \xi^{-2n} \ebm \text{ as }\xi\to\infty
\endeq
(here the parameter $x$ is, after scaling, the independent variable for the 
Painlev\'e-IV functions and is the same as our $x$ defined in 
\eqref{x-chi-z-scales} if $m=n$).  Then the function 
$(2\pi ie^{nx^2})^{-\sigma_3/2} Y(\xi) (2\pi ie^{nx^2})^{\sigma_3/2}$
satisfies a Fokas-Its-Kitaev Riemann-Hilbert problem \cite{Fokas:1991} 
for the (standard) Hermite polynomials.  The solution to this problem can be 
written exactly in terms of $H_n$, $H_{n-1}$, and their Cauchy transforms, 
which is not enough information to construct $H_{n,n}$ or $w_{n,n}^{(III)}$.

\subsection{Notation}
We denote the positive integers by $\mathbb{N}$ and the nonnegative integers 
by $\mathbb{N}_0$.  
If $f$ is a function defined on a specified oriented contour, then $f_+$ 
($f_-$) denotes the boundary value taken from the left (right).  Matrices are 
denoted by bold capital letters, with the exception of the $2\times 2$ 
identity matrix $\mathbb{I}$ and the Pauli matrix
\eq
\sigma_3:=\bbm 1 & 0 \\ 0 & -1 \ebm.
\endeq
The $(jk)$-entry of a matrix ${\bf M}$ is denoted by $[{\bf M}]_{jk}$.

\noindent
{\bf Acknowledgements.}  The author thanks Ferenc Balogh, Thomas Bothner, 
Walter Van Assche, Peter Miller, and Arno Kuijlaars for helpful discussions, 
the Charles Phelps Taft Research Center for a Faculty Release Fellowship, and 
the National Science Foundation for support via grants DMS-1312458 and 
DMS-1615718.

\section{The associated orthogonal polynomials}

To analyze the asymptotic behavior of these functions we will use a 
determinantal formula.  
Define $\tau_{m,n}(y)$ by $\tau_{m,0}(y):=1$ and by the $n\times n$ Hankel determinant 
\eq
\tau_{m,n}(y) := \bvm H_m(y) & H_{m+1}(y) & \cdots & H_{m+n-1}(y) \\ H_{m+1}(y) & H_{m+2}(y) & \cdots & H_{m+n}(y) \\ \vdots & \vdots & \ddots & \vdots \\ H_{m+n-1}(y) & H_{m+n}(y) & \cdots & H_{m+2n-2}(y) \evm_{n\times n}
\endeq
for $n\geq 1$.  Then $\tau_{m,n}$ is related \cite{Kajiwara:1998,Noumi:1999}
to the generalized Hermite polynomial $H_{m,n}$ by 
\eq
\tau_{m,n}(y) = (-1)^{\lceil(n-1)/2\rceil} \left(\prod_{k=0}^{n-1}[k!2^k]\right) H_{m,n}(y),
\label{taumn-ito-Hmn}
\endeq
where $\lceil\cdot\rceil$ denotes the ceiling function.  We rewrite 
$\tau_{m,n}$ in terms of certain moments as follows.  
Let the contour $C$ be the unit circle with clockwise 
orientation.  For $\zeta\in C$, define the measure 
\eq
\d\nu_m(\zeta;y):=\exp\left(\frac{2y}{\zeta}-\frac{1}{\zeta^2}\right)\zeta^m\frac{\text{d}\zeta}{2\pi i\zeta}.
\endeq
Define the moments 
\eq
\label{mu-def}
\mu_k^{(m)}(y) := -\oint_C\zeta^k\d\nu_m(\zeta;y).
\endeq
Now, via the generating function \eqref{Hermite-generating-fun}, the 
Cauchy integral formula for derivatives, and the change of variables 
$s=\zeta^{-1}$, we see we can write the standard Hermite polynomials as 
\eq
\begin{split}
H_{m+j}(y) & = \frac{\d^{m+j}}{\d^{m+j}s}\left.\left(e^{2sy-s^2}\right)\right\vert_{s=0} = -\frac{(m+j)!}{2\pi i}\oint_C \frac{e^{2sy-s^2}\d s}{s^{m+j+1}} \\ 
 & = -(m+j)!\oint_C\zeta^j\d\nu_m(\zeta;y) = (m+j)!\mu_j^{(m)}(y).
\end{split}
\endeq
In particular, this means we can write
\eq
\tau_{m,n}(y) = \bvm m!\mu_0^{(m)}(y) & (m+1)!\mu_1^{(m)}(y) & \cdots & (m+n-1)!\mu_{n-1}^{(m)}(y) \\ (m+1)!\mu_1^{(m)}(y) & (m+2)!\mu_2^{(m)}(y) & \cdots & (m+n)!\mu_n^{(m)}(y) \\ \vdots & \vdots & \ddots & \vdots \\ (m+n-1)!\mu_{n-1}^{(m)}(y) & (m+n)!\mu_n^{(m)}(y) & \cdots & (m+2n-2)!\mu_{2n-2}^{(m)}(y) \evm_{n\times n}.
\endeq
Define the related $n\times n$ Hankel determinant 
\eq
T_{m,n}(y) := \left|\mu_{j+k-2}^{(m)}(y)\right|_{j,k=1}^n = \bvm \mu_0^{(m)}(y) & \mu_1^{(m)}(y) & \cdots & \mu_{n-1}^{(m)}(y) \\ \mu_1^{(m)}(y) & \mu_2^{(m)}(y) & \cdots & \mu_n^{(m)}(y) \\ \vdots & \vdots & \ddots & \vdots \\ \mu_{n-1}^{(m)}(y) & \mu_n^{(m)}(y) & \cdots & \mu_{2n-2}^{(m)}(y) \evm_{n\times n}.
\endeq
Certain ratios of these determinants can be expressed in terms of normalization 
constants for a family of orthogonal polynomials (see \eqref{hmn-ito-Tmn} 
below).  We now show how to relate $\tau_{m,n}$ with $T_{m,n}$ (with shifted 
indices), thus providing a bridge between the rational Painlev\'e-IV functions 
and the orthogonal polynomials.  
\begin{lemma}
\eq
\tau_{m,n}(y) = \left(\prod_{k=0}^{n-1}\left[(m+k)!2^k\right]\right)\cdot T_{m-n+1,n}(y).
\label{taumn-ito-Tmn}
\endeq
\label{lemma-switch-dets}
\end{lemma}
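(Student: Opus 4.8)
The plan is to show that, up to explicit constants, \emph{both} determinants equal the Wronskian $W:=W(H_m,H_{m+1},\dots,H_{m+n-1})$ of $n$ consecutive Hermite polynomials (with the convention $W(f_1,\dots,f_n)=\det[f_k^{(j-1)}]_{j,k=1}^n$), and then to compare the two constants. Write $\nu_\ell:=H_\ell/\ell!$, so that by the computation preceding the lemma one has $\mu^{(m)}_k=\nu_{m+k}$, hence $\tau_{m,n}=\det[(m+j+k-2)!\,\nu_{m+j+k-2}]_{j,k=1}^n$ and $T_{m-n+1,n}=\det[\nu_{m-n-1+j+k}]_{j,k=1}^n$. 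The only inputs I need are the rescaling $H_\ell=\ell!\,\nu_\ell$ and the differentiation rule $H_\ell'=2\ell H_{\ell-1}$, equivalently $\nu_\ell'=2\nu_{\ell-1}$; it is crucial that the latter carries a \emph{constant} factor, which is exactly what lets a determinant of moments collapse to a Wronskian.

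First I would handle the easy half, $T_{m-n+1,n}$. Reversing the order of the rows (a permutation of sign $(-1)^{\binom n2}$) turns the Hankel matrix into $[\nu_{m+k-j}]_{j,k=1}^n$, and since $\partial_y^{\,j-1}\nu_{m+k-1}=2^{\,j-1}\nu_{m+k-j}$ this is $2^{-(j-1)}$ times the $j$th row of the Wronskian matrix of $\nu_m,\dots,\nu_{m+n-1}$. Pulling the scalars $2^{-(j-1)}$ out of the rows and using multilinearity of the Wronskian to replace each $\nu_{m+k}$ by $H_{m+k}/(m+k)!$ yields
\[
T_{m-n+1,n}=(-2)^{-\binom n2}\Big(\textstyle\prod_{k=0}^{n-1}(m+k)!\Big)^{-1}W .
\]
This is pure bookkeeping once the constant-factor differentiation rule is available.

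The crux, and the step I expect to be the real obstacle, is the complementary identity $\tau_{m,n}=(-1)^{\binom n2}W$: that the Hankel determinant of Hermite polynomials equals, up to sign, the Wronskian of consecutive Hermite polynomials. The difficulty is structural: the factorial weight $(m+j+k-2)!$ couples the row index $j$ and the column index $k$ and so cannot be pulled out of the determinant, while the differentiation rule now reads $H_\ell'=2\ell H_{\ell-1}$ with an \emph{index-dependent} factor $2\ell$, so the cheap row-reversal argument used above fails. I would prove it by Andr\'eief's (Heine's) identity. Using $H_\ell=\tfrac{\ell!}{2\pi i}\oint s^{-\ell-1}e^{2sy-s^2}\,\mathrm{d}s$ and differentiating under the integral (each $\partial_y$ brings down a factor $2s$), the Wronskian reduces to a symmetric $n$-fold contour integral whose integrand contains the squared Vandermonde $\prod_{i<l}(s_i-s_l)^2$; the identical Andr\'eief reduction of $T_{m-n+1,n}$ produces the same contour integral, re-proving the previous display. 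For $\tau_{m,n}$ itself I would instead use the factorial-free representation $H_\ell=\tfrac{2^\ell}{\sqrt\pi}\int_{\mathbb R}(y+it)^\ell e^{-t^2}\,\mathrm{d}t$, which makes $\tau_{m,n}$ a Gaussian multiple integral of a power of a characteristic polynomial; the equality of this real integral with the contour integral for $W$ is the one genuinely nontrivial point, an instance of the duality between averages of (powers of) characteristic polynomials and Wronskians of the associated orthogonal polynomials. A self-contained alternative is to check that $\tau_{m,n}$ and $(-1)^{\binom n2}W$ obey the same Desnanot--Jacobi (Hirota) bilinear recursion in $n$ with matching data at $n=0,1$; and since $\tau_{m,n}$ is a constant multiple of $H_{m,n}$ by \eqref{taumn-ito-Hmn}, one may alternatively simply invoke the known Wronskian representation of the generalized Hermite polynomials in \cite{Kajiwara:1998,Noumi:1999}.

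Finally I would assemble the constants. Combining $\tau_{m,n}=(-1)^{\binom n2}W$ with the displayed formula for $T_{m-n+1,n}$ gives
\[
\tau_{m,n}=(-1)^{\binom n2}(-2)^{\binom n2}\Big(\textstyle\prod_{k=0}^{n-1}(m+k)!\Big)T_{m-n+1,n}=2^{\binom n2}\Big(\textstyle\prod_{k=0}^{n-1}(m+k)!\Big)T_{m-n+1,n},
\]
and since $2^{\binom n2}=\prod_{k=0}^{n-1}2^k$ the right-hand side is exactly $\prod_{k=0}^{n-1}\big[(m+k)!\,2^k\big]\,T_{m-n+1,n}$, as claimed. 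The signs are consistent with the base cases $n=1$ (both sides equal $H_m$) and $n=2$ (where $H_mH_{m+2}-H_{m+1}^2=-W(H_m,H_{m+1})$ follows from the three-term recurrence).
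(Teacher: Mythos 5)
Your proposal is correct in structure, and the constant bookkeeping checks out (the row-reversal sign $(-1)^{\binom{n}{2}}$, the factor $2^{-\binom{n}{2}}$ coming from $\nu_\ell'=2\nu_{\ell-1}$, and the factorials recombine exactly to $\prod_{k=0}^{n-1}\left[(m+k)!\,2^k\right]$), but it takes a genuinely different route from the paper. The paper never introduces the Wronskian: it transforms $\tau_{m,n}$ directly into the matrix form of the right-hand side of \eqref{taumn-ito-Tmn} by elementary row operations, repeatedly applying the three-term recurrence $H_{\ell+1}=2yH_\ell-2\ell H_{\ell-1}$ and using adjacent rows first to cancel the $y$-terms and then to renormalize the leading coefficients, iterating row by row; the argument is entirely self-contained. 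You instead factor the identity through $W=W(H_m,\dots,H_{m+n-1})$: the half relating $T_{m-n+1,n}$ to $W$ is, as you say, pure bookkeeping, while the half relating $\tau_{m,n}$ to $W$ is precisely the classical Hermite Tur\'anian--Wronskian identity of Karlin--Szeg\"o \cite[Eq.~(18.2)]{KarlinS:1961} --- which is the very identity the paper records as a \emph{consequence} of Lemma \ref{lemma-switch-dets} in the remark immediately following it. Your route is shorter and more conceptual modulo that classical input, and it exposes the structural content of the lemma; the paper's route buys self-containedness, which is the point here, since the author wants \eqref{taumn-ito-Tmn} as the foundational bridge to the orthogonal polynomials rather than as a corollary of known Tur\'anian theory. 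One caveat: of your three proposed treatments of the crux, only the citation route (combining \eqref{taumn-ito-Hmn} with the known Wronskian representation of $H_{m,n}$ from \cite{Kajiwara:1998,Noumi:1999}) is complete as written; the Andr\'eief/characteristic-polynomial duality and the Desnanot--Jacobi induction are sketches whose details (the contour manipulations, and the matching of bilinear recursions with sufficient initial data in both indices) would still have to be carried out. So as it stands the completeness of your proof rests on the citation, with the overall constant and sign pinned down by your $n=1,2$ base-case checks, which do verify correctly.
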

\begin{proof}
We start by writing the right-hand side of \eqref{taumn-ito-Tmn} in terms of 
Hermite polynomials:
\eq
\label{det-identity-goal}
\left(\prod_{k=0}^{n-1}\left[(m+k)!2^k\right]\right)\cdot T_{m-n+1,n} = \prod_{k=1}^{n-1}2^k \bvm \frac{m!}{(m-n+1)!}H_{m-n+1} & \frac{(m+1)!}{(m-n+2)!}H_{m-n+2}& \hspace{-.1in} \cdots & \hspace{-.1in} \frac{(m+n-1)!}{m!}H_m \\ 
\frac{m!}{(m-n+2)!}H_{m-n+2} & \frac{(m+1)!}{(m-n+3)!}H_{m-n+3} & \hspace{-.1in} \cdots & \hspace{-.1in} \frac{(m+n-1)!}{(m+1)!}H_{m+1} \\
\vdots & \vdots & \hspace{-.1in} \ddots & \hspace{-.1in} \vdots \\ 
m H_{m-1} & (m+1)H_m & \hspace{-.1in} \cdots & \hspace{-.1in}(m+n-1)H_{m+n-2} \\
H_m & H_{m+1} & \hspace{-.1in} \cdots & \hspace{-.1in} H_{m+n-1}
\evm.
\endeq
Our goal is to manipulate $\tau_{m,n}$ into this form.  We start by 
completely reversing the order of the rows:
\eq
\tau_{m,n} = \prod_{k=1}^{n-1}(-1)^k\bvm 
H_{m+n-1} & H_{m+n} & \cdots & H_{m+2n-2} \\
H_{m+n-2} & H_{m+n-1} & \cdots & H_{m+2n-3} \\
H_{m+n-3} & H_{m+n-2} & \cdots & H_{m+2n-4} \\
\vdots & \vdots & \ddots & \vdots \\ 
H_{m+1} & H_{m+2} & \cdots & H_{m+n} \\ 
H_m & H_{m+1} & \cdots & H_{m+n-1}
\evm.
\endeq
Note that the $n$th row is in the desired form (up to the overall constant).  
We now perform a set of operations on the first $n-1$ rows that will leave 
the $(n-1)$st row in the desired form.  Repeating this set of operations on 
the first $n-2$ rows, then the first $n-3$ rows, and so on, will establish 
the identity.  The Hermite polynomials satisfy the recursion relation
\eq
H_{m+1}(y) = 2yH_m(y) - 2mH_{m+1}(y).
\endeq
Using this in the top row gives
\eq
\tau_{m,n} = \prod_{k=1}^{n-1}(-1)^k\bvm 
2yH_{m+n-2}-2(m+n-2)H_{m+n-3} & \cdots & 2yH_{m+2n-3} - 2(m+2n-3)H_{m+2n-4} \\
H_{m+n-2} & \cdots & H_{m+2n-3} \\
H_{m+n-3} & \cdots & H_{m+2n-4} \\
\vdots & \ddots & \vdots \\ 
H_{m+1} & \cdots & H_{m+n} \\ 
H_m & \cdots & H_{m+n-1}
\evm.
\endeq
Note that we can eliminate the terms proportional to $y$ by subtracting a 
multiple of the second row from the first row.  We can then pull out the 
common $-2$ factor from the first row, and subtract a multiple of the third 
row from the first row to change the coefficients in front of the Hermite 
polynomials.  The result is 
\eq
\tau_{m,n} = -2\prod_{k=1}^{n-1}(-1)^k\bvm 
m H_{m+n-3} & \cdots & (m+n-1)H_{m+2n-4} \\
H_{m+n-2} & \cdots & H_{m+2n-3} \\
H_{m+n-3} & \cdots & H_{m+2n-4} \\
\vdots & \ddots & \vdots \\ 
H_{m+1} & \cdots & H_{m+n} \\ 
H_m & \cdots & H_{m+n-1}
\evm.
\endeq
We now carry out the same procedure on rows $2,3,\dots,n-1$:  apply the 
recursion relation, use the next row to remove terms proportional to $y$, 
and then use the subsequent row to change the coefficient of the first 
entry to $m$.  (For row $n-1$ the leading coefficient in column 1 is already 
$m$ once the $y$-terms are removed).  Once every row has been modified in 
this way we obtain
\eq
\tau_{m,n} = 2^{n-1}\prod_{k=1}^{n-2}(-1)^k\bvm 
m H_{m+n-3} & \cdots & (m+n-1)H_{m+2n-4} \\
m H_{m+n-4} & \cdots & (m+n-1)H_{m+2n-5} \\
m H_{m+n-5} & \cdots & (m+n-1)H_{m+2n-6} \\
\vdots & \ddots & \vdots \\ 
m H_{m-1} & \cdots & (m+n-1)H_{m+n-2} \\ 
H_m & \cdots & H_{m+n-1}
\evm.
\endeq
This fixes the last two rows.  We now repeat this procedure on rows 
$1,...,n-2$, the only difference being that we change the leading 
coefficients in column 1 to $m(m-1)$.  The result is 
\eq
\tau_{m,n} = \prod_{k=n-2}^{n-1}2^k\prod_{j=1}^{n-3}(-1)^j\bvm 
m(m-1) H_{m+n-5} & \cdots & (m+n-1)(m+n-2)H_{m+2n-6} \\
m(m-1) H_{m+n-6} & \cdots & (m+n-1)(m+n-2)H_{m+2n-7} \\
\vdots & \ddots & \vdots \\ 
m(m-1) H_{m-2} & \cdots & (m+n-1)(m+n-2)H_{m+n-3} \\
m H_{m-1} & \cdots & (m+n-1)H_{m+n-2} \\ 
H_m & \cdots & H_{m+n-1}
\evm.
\endeq
Note that now the final three rows have the intended form.  Repeating this 
procedure $n-3$ more times, each time involving one less row than before and 
modifying the leading coefficient appropriately (i.e. so the last row 
changed has the correct coefficient), yields the form 
\eqref{det-identity-goal}, as desired.
\end{proof}
\begin{remark}
We observe that the result of Lemma \ref{lemma-switch-dets} can be written in 
terms of Hermite polynomials as
\eq
\left|H_{m+j+k-2}(y)\right|_{j,k=1}^n = \prod_{k=0}^{n-1}2^k\cdot\left|\frac{(m+k-1)!}{(m-n+j+k-1)!}H_{m-n+j+k-1}(y)\right|_{j,k=1}^{n}.
\endeq
Hankel determinants of orthogonal polynomials such as the expression on the 
left-hand side are known as \emph{Tur\'anians}.  The Hermite Tur\'anian can 
be expressed as a Wronskian for general $m$ 
\cite[Equation (18.2)]{KarlinS:1961} and evaluated in closed form for $m=0$ 
\cite[Equation (3.55)]{Krattenthaler:1999}.  For more background and 
references on Tur\'anians see \cite{Ismail:2005}.
\end{remark}
For fixed $m\in\mathbb{N}_0$, define the monic orthogonal polynomials 
$\psi_n^{(m)}$, $n\geq 0$, by 
\eq
\label{orth-poly-def}
\oint_C\psi_n^{(m)}(\zeta;y)\zeta^j\d\nu_m(\zeta;y) = \delta_{jn}h_n^{(m)}(y), \quad j=0,\dots,n,
\endeq
where $\delta_{jn}$ is the Kroneker delta function and $h_n^{(m)}(y)$ is the 
normalization constant (that is, constant in $\zeta$ but with parametric 
dependence on $y$).  
Then (see, for example, \cite{Bertola:2014,BaloghBB:2016}) the value of the orthogonal polynomials evaluated at $\zeta=0$ can be 
expressed in terms of determinants via 
\eq
\psi_n^{(m)}(0;y) = (-1)^n\frac{T_{m+1,n}(y)}{T_{m,n}(y)},
\label{psimn-ito-Tmn}
\endeq
and the normalization constant $h_n^{(m)}$ can be expressed as
\eq
h_n^{(m)}(y) = -\frac{T_{m,n+1}(y)}{T_{m,n}(y)}.
\label{hmn-ito-Tmn}
\endeq
Note that \eqref{psimn-ito-Tmn} and \eqref{hmn-ito-Tmn} provide ways to 
shift the two indices of $T_{m,n}(y)$.  Applying \eqref{taumn-ito-Hmn}, 
\eqref{taumn-ito-Tmn}, \eqref{psimn-ito-Tmn}, and \eqref{hmn-ito-Tmn} to 
\eqref{w-mn-I}--\eqref{w-mn-II} gives 
\eq
\label{wI-ito-psi}
w_{m,n}^{(I)}(y) = \frac{\text{d}}{\text{d}y}\log\left(\frac{\tau_{m+1,n}(y)}{\tau_{m,n}(y)}\right) = \frac{\text{d}}{\text{d}y}\log\left(\frac{T_{m-n+2,n}}{T_{m-n+1,n}}\right) = \frac{\partial}{\partial y}\log\left(\psi_n^{(m-n+1)}(0;y)\right)
\endeq
and
\eq
\label{wII-ito-psi}
w_{m,n}^{(II)}(y) = \frac{\text{d}}{\text{d}y}\log\left(\frac{\tau_{m,n}(y)}{\tau_{m,n+1}(y)}\right) = \frac{\text{d}}{\text{d}y}\log\left(\frac{T_{m-n+1,n}(y)}{T_{m-n,n+1}(y)}\right) = \frac{\partial}{\partial y}\log\left(\frac{\psi_n^{(m-n)}(0;y)}{h_n^{(m-n)}(y)}\right).
\endeq
Note that $w_{m,n}^{(III)}(y)$ can also be expressed in terms of the 
orthogonal polynomials and their normalization constants through the previous 
two equations and \eqref{w-mn-III}.  We now introduce the 
Fokas-Its-Kitaev Riemann-Hilbert problem \cite{Fokas:1991} in order to analyze 
the large-degree behavior of the orthogonal polynomials.  

\begin{rhp}[Unscaled orthogonal polynomial problem]
Fix $y\in\mathbb{C}$ and $m,n\in\mathbb{N}$.  
Seek a $2\times 2$ matrix ${\bf M}_{m,n}(\zeta;y)$ 
with the following properties:
\begin{itemize}
\item[]\textbf{Analyticity:}  ${\bf M}_{m,n}(\zeta;y)$
is analytic for $\zeta\in\mathbb{C}$ except on $C$ (the unit circle oriented 
clockwise) with H\"older-continuous boundary values.
\item[]\textbf{Jump condition:}  The boundary values taken by 
${\bf M}_{m,n}(\zeta;y)$ on $C$ are related by the jump condition 
\eq
\label{M-jump}
{\bf M}_{m,n+}(\zeta;y)={\bf M}_{m,n-}(\zeta;y) \bbm 1 & \displaystyle\frac{1}{2\pi i\zeta}\exp\left(\frac{2y}{\zeta}-\frac{1}{\zeta^2}+m\log\zeta\right) \\ 0 & 1 \ebm, \quad \zeta\in C.
\endeq 
\item[]\textbf{Normalization:}  As $\zeta\to\infty$, the matrix 
${\bf M}_{m,n}(\zeta;y)$ satisfies the condition
\begin{equation}
{\bf M}_{m,n}(\zeta;y) = (\mathbb{I}+\mathcal{O}(\zeta^{-1}))\zeta^{n\sigma_3}
\end{equation}
with the limit being uniform with respect to direction.
\end{itemize}
\label{rhp:M}
\end{rhp}
This Riemann-Hilbert problem is solvable exactly when $\psi_n^{(m)}$ exists, 
and 
\eq
\psi_n^{(m)}(\zeta;y) = [{\bf M}_{m,n}(\zeta;y)]_{11}
\endeq
(that is, the 11-entry of ${\bf M}$) while
\eq
h_n^{(m)}(y) = -2\pi i\lim_{\zeta\to\infty}\zeta\left[{\bf M}_{m,n}(\zeta;y)\zeta^{-n\sigma_3}-\mathbb{I}\right]_{12}.
\endeq
Motivated by the exponent in \eqref{M-jump}, we define rescaled versions of 
$y$ and $\zeta$:
\eq
\label{x-chi-z-scales}
x:=m^{-1/2}y, \quad z:=n^{1/2}\zeta.
\endeq
These definitions suggest scaling the orthogonal polynomials as well.  Define
\eq
\label{Psi-H-def}
\Psi_n^{(m)}(z;x):=n^{n/2}\psi_n^{(m)}\left(\frac{z}{n^{1/2}};m^{1/2}x\right), \quad \mathcal{H}_n^{(m)}(x):=n^{n+\frac{m}{2}}h_n^{(m)}(m^{1/2}x).
\endeq
These new polynomials satisfy the orthogonality relations 
\eq
\oint_C\Psi_n^{(m)}(z;x)z^jdV_m(z;x) = \delta_{jn}\mathcal{H}_n^{(m)}(x), \quad j=0,\dots,n, \quad dV_m:=\exp\left(n\left[\frac{2r^{1/2}x}{z}-\frac{1}{z^2}\right]\right)\frac{z^{r\cdot n}dz}{2\pi iz},
\endeq
where $r=m/n$.  

The desired rational functions can be expressed in terms of the scaled 
orthogonal polynomials as
\eq
\label{wI-ito-log-Psi}
m^{1/2}w_{m,n}^{(I)}(m^{1/2}x) = \frac{\partial}{\partial x}\log\left(\Psi_n^{(m-n+1)}(0;x)\right) = \frac{\frac{\partial}{\partial x}\Psi_n^{(m-n+1)}(0;x)}{\Psi_n^{(m-n+1)}(0;x)}
\endeq
and 
\eq
\label{wII-ito-log-Psi}
m^{1/2}w_{m,n}^{(II)}(m^{1/2}x) = \frac{\partial }{\partial x}\log\left(\frac{\Psi_n^{(m-n)}(0;x)}{\mathcal{H}_n^{(m-n)}(x)}\right).
\endeq

We now pose a Riemann-Hilbert problem for the orthogonal polynomials 
$\Psi_n^{(m-n+1)}(z;x)$.  
\begin{rhp}[Scaled orthogonal polynomial problem]
Fix $x\in\mathbb{C}$ and $m,n\in\mathbb{N}$ with $m\geq n$ and set 
$r=m/n$.  Find the unique $2\times 2$ matrix ${\bf N}_{m,n}(z;x)$ 
with the following properties:
\begin{itemize}
\item[]\textbf{Analyticity:}  ${\bf N}_{m,n}(z;x)$
is analytic in $z$ except on $C$ (the unit circle oriented clockwise)
with H\"older-continuous boundary values.
\item[]\textbf{Jump condition:}  The boundary values taken by 
${\bf N}_{m,n}(z;x)$ on $C$ are related by the jump condition 
\eq
\label{N-jump}
{\bf N}_{m,n+}(z;x)={\bf N}_{m,n-}(z;x) \bbm 1 & \displaystyle\frac{1}{2\pi i}e^{-n\theta(z;x,r)} \\ 0 & 1 \ebm, \quad z\in C,
\endeq 
where
\eq
\theta(z;x,r):=(1-r)\log z -\frac{2r^{1/2}x}{z} + \frac{1}{z^2}.
\endeq
\item[]\textbf{Normalization:}  As $z\to\infty$, the matrix 
${\bf N}_{m,n}(z;x)$ satisfies the condition
\begin{equation}
\label{N-large-z}
{\bf N}_{m,n}(z;x) = (\mathbb{I}+\mathcal{O}(z^{-1}))z^{n\sigma_3}
\end{equation}
with the limit being uniform with respect to direction.
\end{itemize}
\label{rhp:N}
\end{rhp}
It is immediate that  
\eq
\label{Psi-ito-N}
\Psi_n^{(m-n+1)}(0;x) = [{\bf N}_{m,n}(0;x)]_{11}
\endeq
and 
\eq
\label{H-ito-N}
\mathcal{H}_n^{(m-n+1)}(x) = -2\pi i\lim_{z\to\infty}z[{\bf N}_{m,n}(z;x)z^{-n\sigma_3}-\mathbb{I}]_{12}.
\endeq
In the next two lemmas we show how to extract $w_{m,n}^{(I)}$ and 
$w_{m,n}^{(II)}$ directly from the solution of the Riemann-Hilbert problem.  

\begin{lemma}
\label{wI-ito-N-lemma}
Write the expansion of ${\bf N}_{m,n}(z;x)$ about $z=0$ as 
\eq
\label{N-expansion}
{\bf N}_{m,n}(z;x) = {\bf N}_0(x) + {\bf N}_1(x)z + \mathcal{O}(z^2),
\endeq
where ${\bf N}_0(x)$ and ${\bf N}_1(x)$ are independent of $z$.  Then
\eq
\label{wI-ito-N}
\frac{1}{n^{1/2}}w_{m,n}^{(I)}(m^{1/2}x) = \left([{\bf N}_0(x)]_{11}[{\bf N}_0(x)]_{22}+[{\bf N}_0(x)]_{12}[{\bf N}_0(x)]_{21}-1\right)\frac{[{\bf N}_1(x)]_{11}}{[{\bf N}_0(x)]_{11}} - 2[{\bf N}_0(x)]_{12}[{\bf N}_1(x)]_{21}.
\endeq
\end{lemma}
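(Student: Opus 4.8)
The plan is to reduce the statement to a computation of the $x$-derivative of the corner value $[{\bf N}_0]_{11}=\Psi_n^{(m-n+1)}(0;x)$ and then to evaluate that derivative through the $x$-dependence of the Riemann-Hilbert problem. By \eqref{wI-ito-log-Psi} and \eqref{Psi-ito-N} one has $m^{1/2}w_{m,n}^{(I)}(m^{1/2}x)=\partial_x\log[{\bf N}_0]_{11}$; since $m=rn$ this is the same as $\frac{1}{n^{1/2}}w_{m,n}^{(I)}(m^{1/2}x)=\frac{1}{r^{1/2}n}\partial_x\log[{\bf N}_0]_{11}$, so it suffices to produce a formula for $\partial_x\log[{\bf N}_0]_{11}$ carrying a compensating factor $r^{1/2}n$. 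Two structural facts are used throughout: first, $\det{\bf N}_{m,n}\equiv 1$ (the jump \eqref{N-jump} is unimodular, so $\det{\bf N}_{m,n}$ has no jump, tends to $1$ at $\infty$, is finite at $0$, and Liouville applies); second, because the jump is upper triangular the first column of ${\bf N}_{m,n}$ is continuous across $C$ and hence entire, so $a:=[{\bf N}_{m,n}]_{11}$ and $c:=[{\bf N}_{m,n}]_{21}$ are polynomials of degrees $n$ and $n-1$, with $a(0)=[{\bf N}_0]_{11}$, $c(0)=[{\bf N}_0]_{21}$, $a'(0)=[{\bf N}_1]_{11}$, $c'(0)=[{\bf N}_1]_{21}$.

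Next I would differentiate the Riemann-Hilbert problem in $x$. Set $\Phi:=(\partial_x{\bf N}_{m,n}){\bf N}_{m,n}^{-1}$. Since the only $x$-dependence in \eqref{N-jump} enters through $\partial_x\theta=-2r^{1/2}/z$, the jump of $\Phi$ across $C$ works out (writing $v=\frac{1}{2\pi i}z^{m-n}e^{2nr^{1/2}x/z-n/z^2}$ for the $(12)$-entry of the jump matrix, and using $\det{\bf N}_{m,n}=1$) to $\Phi_+-\Phi_-=G$ with $G=\frac{2nr^{1/2}}{z}v\bbm -ac & a^2 \\ -c^2 & ac \ebm$. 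The essential point is that $a$, $c$, and $v$ are all analytic in a neighborhood of $C$, so $G$ continues analytically across $C$. A gluing/Liouville argument then shows that $\Phi$ extends to a function analytic on $\mathbb{C}\setminus\{0\}$ vanishing at $\infty$, which in the unit disk equals $-(\text{the regular part of }G\text{ at }0)$; in particular $\Phi$ is analytic at $z=0$ and $\Phi(0)$ is minus the constant Laurent coefficient of $G$ at the origin. Evaluating $\partial_x{\bf N}_0=\Phi(0){\bf N}_0$ yields $\partial_x\log[{\bf N}_0]_{11}=[\Phi(0)]_{11}+[\Phi(0)]_{12}[{\bf N}_0]_{21}/[{\bf N}_0]_{11}$, where $[\Phi(0)]_{11}=2nr^{1/2}[vac]_1$ and $[\Phi(0)]_{12}=-2nr^{1/2}[va^2]_1$ (subscript $1$ denoting the coefficient of $z^1$).

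I expect the main obstacle to be that the essential singularity of $v$ at $0$ makes $[vac]_1$ and $[va^2]_1$ infinite Laurent sums; the resolution, which is the heart of the argument, is to feed in the second-column data. The entry $b:=[{\bf N}_{m,n}]_{12}$ has additive jump $b_+-b_-=va$ across $C$ and decays at $\infty$, so it is the Cauchy transform of $va$: its Taylor coefficients at $0$ are the nonnegative Laurent coefficients of $va$ (up to the orientation sign of $C$), giving $[va]_0=-[{\bf N}_0]_{12}$ and $[va]_1=-[{\bf N}_1]_{12}$, while its vanishing to order $z^{-n-1}$ at $\infty$ — equivalently the orthogonality of the degree-$n$ polynomial $a$ — forces $[va]_{-1}=\cdots=[va]_{-n}=0$. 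Writing $[vac]_1=\sum_j[va]_{1-j}c_j$ and $[va^2]_1=\sum_j[va]_{1-j}a_j$ and invoking these vanishings together with $\deg c=n-1$ and $\deg a=n$, every term collapses except $j=0,1$, leaving $[vac]_1=[va]_1c(0)+[va]_0c'(0)$ and $[va^2]_1=[va]_1a(0)+[va]_0a'(0)$.

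Finally I would substitute these into the expression for $\partial_x\log[{\bf N}_0]_{11}$. The terms proportional to $[{\bf N}_1]_{12}$ cancel, the prefactor $2nr^{1/2}$ cancels the $1/(r^{1/2}n)$ from the reduction, and one is left with $\frac{1}{n^{1/2}}w_{m,n}^{(I)}(m^{1/2}x)=2[{\bf N}_0]_{12}\big([{\bf N}_0]_{21}[{\bf N}_1]_{11}/[{\bf N}_0]_{11}-[{\bf N}_1]_{21}\big)$. Replacing $2[{\bf N}_0]_{12}[{\bf N}_0]_{21}$ by $[{\bf N}_0]_{11}[{\bf N}_0]_{22}+[{\bf N}_0]_{12}[{\bf N}_0]_{21}-1$ via $\det{\bf N}_0=1$ puts this in exactly the stated form \eqref{wI-ito-N}. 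Beyond the essential-singularity issue already flagged, the one delicate bookkeeping point is the single overall sign, which is fixed by the clockwise orientation of $C$ and is precisely what makes the Cauchy-transform and Laurent representations agree; I would verify it by checking consistency of the final expression rather than tracking orientation conventions through every step.
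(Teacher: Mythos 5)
Your proof is correct, and it reaches \eqref{wI-ito-N} by a genuinely different route than the paper. The paper removes the $x$-dependence of the jump first: setting $\widetilde{\bf N}_{m,n}={\bf N}_{m,n}e^{-n\theta\sigma_3/2}$, the matrix ${\bf W}_{m,n}=(\partial_x\widetilde{\bf N}_{m,n})\widetilde{\bf N}_{m,n}^{-1}=(\partial_x{\bf N}_{m,n}){\bf N}_{m,n}^{-1}+\frac{nr^{1/2}}{z}{\bf N}_{m,n}\sigma_3{\bf N}_{m,n}^{-1}$ has no jump at all and at worst a simple pole at $z=0$, so Liouville gives $z{\bf W}_{m,n}\equiv nr^{1/2}{\bf N}_{m,n}(0;x)\sigma_3{\bf N}_{m,n}(0;x)^{-1}$; this Lax-type equation in $x$, expanded at $z=0$, yields \eqref{wI-ito-N} in a few lines (with $\det{\bf N}_0=1$ used implicitly to write ${\bf N}_0^{-1}$ entrywise, exactly as you use it). You instead keep the jump: your $\Phi=(\partial_x{\bf N}_{m,n}){\bf N}_{m,n}^{-1}$ (the paper's ${\bf W}_{m,n}$ minus its pole term) jumps by a matrix $G$ that continues analytically to $\mathbb{C}\setminus\{0\}$, so the additive problem is solved by Laurent splitting, $\Phi=-G_{\rm reg}$ inside $C$ and $\Phi=G_{\rm sing}$ outside, and the needed Laurent coefficients of $G$ are then converted into Riemann-Hilbert data using the polynomial first column, the Cauchy-transform second column, and the orthogonality relations $[va]_{-1}=\cdots=[va]_{-n}=0$. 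I checked the details: with the clockwise orientation one indeed gets $b_{\rm int}=-(va)_{\rm reg}$ and $\Phi(0)=-[G]_0$, the $[{\bf N}_1]_{12}$ terms cancel as you claim, and your reduced expression $2[{\bf N}_0]_{12}\bigl([{\bf N}_0]_{21}[{\bf N}_1]_{11}/[{\bf N}_0]_{11}-[{\bf N}_1]_{21}\bigr)$ coincides with \eqref{wI-ito-N} via $\det{\bf N}_0=1$. As for what each approach buys: the paper's conjugation trick is shorter and needs nothing about degrees or orthogonality, which is why it transfers verbatim to any deformation parameter entering the exponent of the weight; your argument carries more bookkeeping but exposes extra structure along the way (inside $C$ both $\Phi$ and the second column of ${\bf N}_{m,n}$ are polynomials, namely regular parts of explicit Laurent series) and stays closer to classical orthogonal-polynomial reasoning. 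Two phrasings deserve correction, though neither is a gap: $\Phi$ does not literally ``extend to a function analytic on $\mathbb{C}\setminus\{0\}$'' (it has a genuine jump across $C$; what your gluing argument proves, and all you use, is that its restriction to the unit disk equals the polynomial $-G_{\rm reg}$), and the sums $\sum_j[va]_{1-j}c_j$ and $\sum_j[va]_{1-j}a_j$ are finite since $a,c$ are polynomials --- the real danger is the appearance of the unknown coefficients $[va]_k$ with $k<0$, which is precisely what orthogonality and the degree bounds eliminate.
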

\begin{proof}
From \eqref{Psi-ito-N}, we have 
\eq
\label{Psi-ito-N0}
\Psi_n^{(m-n+1)}(0;x)=[{\bf N}_0(x)]_{11}.
\endeq  
Thus, from the last expression in \eqref{wI-ito-log-Psi} we merely need to 
express $\frac{\partial}{\partial x}\Psi_n^{(m-n+1)}(0;x) = \left[\frac{\partial}{\partial x}{\bf N}_{m,n}(0;x)\right]_{11}$ in terms of (undifferentiated) 
entries of ${\bf N}_{m,n}$.  Define 
\eq
\label{Ntilde-def}
\widetilde{\bf N}_{m,n}(z;x):={\bf N}_{m,n}(z;x)e^{-n\theta(z;x,r)\sigma_3/2}.
\endeq
This function is analytic in $\mathbb{C}\backslash\{0\cup C\}$ with a jump 
discontinuity on $C$ that is independent of $x$ (and $z$).  This means that 
$\frac{\partial}{\partial x}\widetilde{\bf N}_{m,n}(z;x)$ has the same 
properties with the same jump on $C$.  It follows that 
\eq
\label{W-ito-Ntilde}
{\bf W}_{m,n}(z;x):=\left(\frac{\partial}{\partial x}\widetilde{\bf N}_{m,n}(z;x)\right)\widetilde{\bf N}_{m,n}(z;x)^{-1}
\endeq
is analytic in $\mathbb{C}\backslash 0$.  Inserting \eqref{Ntilde-def} into 
\eqref{W-ito-Ntilde} gives
\eq
\label{W-ito-N}
{\bf W}_{m,n}(z;x) = \left(\frac{\partial}{\partial x}{\bf N}_{m,n}(z;x)\right){\bf N}_{m,n}(z;x)^{-1} + \frac{nr^{1/2}}{z}{\bf N}_{m,n}(z;x)\sigma_3{\bf N}_{m,n}(z;x)^{-1}.
\endeq
This shows that ${\bf W}_{m,n}(z;x)$ has a simple pole at $z=0$ and, in 
particular, that $z{\bf W}_{m,n}(z;x)$ is entire in $z$.  Inserting 
the large-$z$ expansion \eqref{N-large-z} into \eqref{W-ito-Ntilde} (using 
\eqref{Ntilde-def}) shows that ${\bf W}_{m,n}(z;x)=\mathcal{O}(z^{-1})$ as 
$z\to\infty$.  This demonstrates that $z{\bf W}_{m,n}(z;x)$ is bounded as 
$z\to\infty$.  Therefore Liouville's theorem tells us that 
$z{\bf W}_{m,n}(z;x)$ is a constant matrix (i.e. independent of $z$ with 
parametric dependence on $x$).  This constant can be determined by 
considering \eqref{W-ito-N} and noting that the first summand on the 
right-hand side is bounded as $z\to 0$.  Thus
\eq
\label{W-simplified}
{\bf W}_{m,n}(z;x) = \frac{nr^{1/2}}{z}{\bf N}_{m,n}(0;x)\sigma_3{\bf N}_{m,n}(0;x)^{-1}.
\endeq
Combining \eqref{W-ito-N} and \eqref{W-simplified} gives
\eq
\frac{\partial}{\partial x}{\bf N}_{m,n}(z;x) = \frac{nr^{1/2}}{z}\left({\bf N}_{m,n}(0;x)\sigma_3{\bf N}_{m,n}(0;x)^{-1}{\bf N}_{m,n}(z;x) - {\bf N}_{m,n}(z;x)\sigma_3\right).
\endeq
Evaluating both sides at $z=0$ (using the expansion \eqref{N-expansion} on 
the right-hand side) yields
\eq
\frac{\partial}{\partial x}{\bf N}_{m,n}(0;x) = nr^{1/2}\left({\bf N}_0(x)\sigma_3{\bf N}_0(x)^{-1}{\bf N}_1(x) - {\bf N}_1(x)\sigma_3\right).
\endeq
Therefore
\eq
\label{Psi-deriv-ito-N}
\begin{split}
\frac{\partial}{\partial x}\Psi_n^{(m-n+1)}(0;x) & = \left[\frac{\partial}{\partial x}{\bf N}_{m,n}(0;x)\right]_{11} \\ 
& = nr^{1/2}\big[ \left( [{\bf N}_0(x)]_{11}[{\bf N}_0(x)]_{22} + [{\bf N}_0(x)]_{12}[{\bf N}_0(x)]_{21} - 1\right)[{\bf N}_1(x)]_{11} \\ 
 & \hspace{.55in} - 2[{\bf N}_0(x)]_{11}[{\bf N}_0(x)]_{12}[{\bf N}_1(x)]_{21} \big].
\end{split}
\endeq
Combining \eqref{wI-ito-log-Psi}, \eqref{Psi-ito-N0}, and 
\eqref{Psi-deriv-ito-N} finishes the proof.
\end{proof}

\begin{lemma}
\label{wII-ito-N-lemma}
Write the expansion of ${\bf N}_{m,n}(z;x)$ as $z\to\infty$ as 
\eq
\label{N-inf-expansion}
{\bf N}_{m,n}(z;x) = \left(\mathbb{I}+\frac{{\bf N}_{-1}(x)}{z}+\mathcal{O}\left(\frac{1}{z^2}\right)\right)z^{n\sigma_3}
\endeq
and recall the expansion \eqref{N-expansion} about $z=0$.  Then 
\eq
\frac{1}{n^{1/2}}w_{m+1,n}^{(II)}((m+1)^{1/2}x) = \left(\frac{1}{n^{1/2}}w_{m,n}^{(I)}(m^{1/2}x) + \frac{2[{\bf N}_0]_{11}[{\bf N}_0]_{12}}{[{\bf N}_{-1}]_{12}}\right)\left(1+\mathcal{O}\left(\frac{1}{m}\right)\right).
\endeq
Here $w_{m,n}^{(I)}$ can be expressed in terms of ${\bf N}_{m,n}$ via 
Lemma \ref{wI-ito-N-lemma}.  
\end{lemma}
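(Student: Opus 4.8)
The plan is to reduce the statement to the index-$m$ Riemann-Hilbert Problem \ref{rhp:N} already at hand, exploiting the fact that advancing the first index of $w^{(II)}$ by one aligns its orthogonal-polynomial superscript with that of $w^{(I)}$. First I would replace $m$ by $m+1$ in \eqref{wII-ito-psi}, obtaining $w_{m+1,n}^{(II)}(y)=\frac{\d}{\d y}\log\left(\psi_n^{(m-n+1)}(0;y)/h_n^{(m-n+1)}(y)\right)$, and compare it with \eqref{wI-ito-psi}, namely $w_{m,n}^{(I)}(y)=\frac{\d}{\d y}\log\psi_n^{(m-n+1)}(0;y)$. Both now carry the \emph{same} superscript $m-n+1$, exactly the one attached to Problem \ref{rhp:N}, so subtracting cancels the common factor $\psi_n^{(m-n+1)}(0;y)$ and yields the exact identity $w_{m+1,n}^{(II)}(y)=w_{m,n}^{(I)}(y)-\frac{\d}{\d y}\log h_n^{(m-n+1)}(y)$, valid at a single common point $y$. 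Everything then reduces to expressing the logarithmic derivative of the normalization constant in terms of the entries of ${\bf N}_{m,n}$.

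For that I would use \eqref{H-ito-N} together with the expansion \eqref{N-inf-expansion}, which gives $\mathcal{H}_n^{(m-n+1)}(x)=-2\pi i[{\bf N}_{-1}(x)]_{12}$; since by \eqref{Psi-H-def} this is a $y$-independent constant times $h_n^{(m-n+1)}$ at the scaled argument, the constant $-2\pi i$ drops under $\frac{\partial}{\partial x}\log$, leaving $\frac{\partial}{\partial x}\log\mathcal{H}_n^{(m-n+1)}(x)=\frac{\partial}{\partial x}\log[{\bf N}_{-1}]_{12}$. To differentiate ${\bf N}_{-1}$ I would reuse the differential identity derived in the proof of Lemma \ref{wI-ito-N-lemma}, $\frac{\partial}{\partial x}{\bf N}_{m,n}(z;x)=\frac{nr^{1/2}}{z}\left({\bf N}_{m,n}(0;x)\sigma_3{\bf N}_{m,n}(0;x)^{-1}{\bf N}_{m,n}(z;x)-{\bf N}_{m,n}(z;x)\sigma_3\right)$, but now expand it as $z\to\infty$ rather than at $z=0$. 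Writing ${\bf N}_{m,n}={\bf A}(z)z^{n\sigma_3}$ with ${\bf A}=\mathbb{I}+{\bf N}_{-1}z^{-1}+\cdots$ and using that $z^{n\sigma_3}$ commutes with $\sigma_3$, the coefficient of $z^{-1}$ gives $\frac{\partial}{\partial x}{\bf N}_{-1}=nr^{1/2}\left({\bf N}_0\sigma_3{\bf N}_0^{-1}-\sigma_3\right)$. A short computation of ${\bf N}_0\sigma_3{\bf N}_0^{-1}$, using that $\det{\bf N}_{m,n}\equiv 1$ (the jump in \eqref{N-jump} is unimodular and ${\bf N}_{m,n}z^{-n\sigma_3}\to\mathbb{I}$, so $\det{\bf N}_0=1$), produces $[{\bf N}_0\sigma_3{\bf N}_0^{-1}]_{12}=-2[{\bf N}_0]_{11}[{\bf N}_0]_{12}$, hence $[\frac{\partial}{\partial x}{\bf N}_{-1}]_{12}=-2nr^{1/2}[{\bf N}_0]_{11}[{\bf N}_0]_{12}$ and $\frac{\partial}{\partial x}\log[{\bf N}_{-1}]_{12}=-2nr^{1/2}[{\bf N}_0]_{11}[{\bf N}_0]_{12}/[{\bf N}_{-1}]_{12}$.

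Combining the first-paragraph identity with \eqref{wI-ito-log-Psi} and the computation just made, and using $m=rn$ so that $m^{1/2}=r^{1/2}n^{1/2}$ and $nr^{1/2}=m^{1/2}n^{1/2}$, I would divide through by $n^{1/2}$ to land on the stated combination $n^{-1/2}w_{m,n}^{(I)}(m^{1/2}x)+2[{\bf N}_0]_{11}[{\bf N}_0]_{12}/[{\bf N}_{-1}]_{12}$. The step I expect to be the main obstacle is purely the bookkeeping of the scalings: the left-hand function $w_{m+1,n}^{(II)}$ intrinsically calls for the argument $y=(m+1)^{1/2}x$ and the ratio $(m+1)/n$, whereas ${\bf N}_{m,n}$, and hence the right-hand side, is built from $y=m^{1/2}x$ and $r=m/n$. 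Replacing $(m+1)^{1/2}x$ by $m^{1/2}x$ inside $w_{m,n}^{(I)}$ and inside $\frac{\d}{\d y}\log h_n^{(m-n+1)}$ changes their values by a relative amount governed by the sizes of the corresponding $y$-derivatives, which are $\mathcal{O}(n^{1/2}/m^{1/2})$ against functions of size $\mathcal{O}(n^{1/2})$; together with the explicit prefactor $m^{1/2}/(m+1)^{1/2}=(1+m^{-1})^{-1/2}=1+\mathcal{O}(m^{-1})$ this assembles into the single multiplicative factor $1+\mathcal{O}(1/m)$ of the statement. I would carry out this estimate uniformly for $x$ in compact subsets of the genus-zero region, where the boundedness of the scaled functions and their $x$-derivatives needed to justify the error bound is available.
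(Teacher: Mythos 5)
Your proposal is correct and follows essentially the same route as the paper's proof: shift $m\to m+1$ in \eqref{wII-ito-psi} so that both $w^{(II)}_{m+1,n}$ and $w^{(I)}_{m,n}$ involve the same data $\psi_n^{(m-n+1)}$, $h_n^{(m-n+1)}$, identify $\mathcal{H}_n^{(m-n+1)}(x)=-2\pi i[{\bf N}_{-1}(x)]_{12}$ via \eqref{H-ito-N}, and obtain $\frac{\partial}{\partial x}[{\bf N}_{-1}]_{12}=-2nr^{1/2}[{\bf N}_0]_{11}[{\bf N}_0]_{12}$ by expanding the deformation identity from the proof of Lemma \ref{wI-ito-N-lemma} at $z=\infty$ (your expansion of the differential identity is the same computation as the paper's expansion of ${\bf W}_{m,n}$, whose residue at infinity is matched against the constant matrix $nr^{1/2}{\bf N}_0\sigma_3{\bf N}_0^{-1}$). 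The only distinction is that you treat the scaling mismatch between the argument $(m+1)^{1/2}x$ on the left and the $m^{1/2}x$-based quantities on the right explicitly via a derivative estimate, whereas the paper absorbs it silently into the factor $1+\mathcal{O}(1/m)$; this is added care within the same method, not a different argument.
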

\begin{proof}
Starting from \eqref{wII-ito-log-Psi}, we shift $m\to m+1$ and use Lemma 
\ref{wI-ito-N-lemma} to discover 
\eq
\label{wII-ito-wI-and-H}
\begin{split}
\frac{1}{n^{1/2}}w_{m+1,n}^{(II)}((m+1)^{1/2}x) & = \left(\frac{1}{n^{1/2}}\frac{\partial }{\partial x}\log\left(\Psi_n^{(m-n+1)}(0;x)\right) - \frac{1}{n^{1/2}}\frac{\partial }{\partial x}\log\left(\mathcal{H}_n^{(m-n+1)}(x)\right)\right)\frac{1}{(m+1)^{1/2}} \\
  & = \left(\frac{1}{n^{1/2}}w_{m,n}^{(I)}(m^{1/2}x) - \frac{1}{n\cdot r^{1/2}}\frac{\partial}{\partial x}\log\left(\mathcal{H}_n^{(m-n+1)}(x)\right)\right)\frac{m^{1/2}}{(m+1)^{1/2}} \\
  & = \left(\frac{1}{n^{1/2}}w_{m,n}^{(I)}(m^{1/2}x) - \frac{1}{n\cdot r^{1/2}}\frac{\frac{\partial}{\partial x}\mathcal{H}_n^{(m-n+1)}(x)}{\mathcal{H}_n^{(m-n+1)}}\right)\left(1+\mathcal{O}\left(\frac{1}{m}\right)\right).
\end{split}
\endeq
From \eqref{H-ito-N} and \eqref{N-inf-expansion} we have 
\eq
\label{H-ito-Nm1}
\mathcal{H}_{n}^{(m-n+1)}(x) = -2\pi i[{\bf N}_{-1}(x)]_{12}.
\endeq
We now express $\frac{\partial}{\partial x}\mathcal{H}_{n}^{(m-n+1)}(x) = -2\pi i\frac{\partial}{\partial x}[{\bf N}_{-1}(x)]_{12}$ in terms of 
undifferentiated entries of ${\bf N}_{m,n}$.  Insert the large-$z$ expansion 
\eqref{N-inf-expansion} into the expression \eqref{W-ito-N} for 
${\bf W}_{m,n}$:
\eq
{\bf W}_{m,n}(z;x) = \frac{1}{z}\left(\frac{\partial}{\partial x}{\bf N}_{-1}(x) + n\cdot r^{1/2}\sigma_3\right) + \mathcal{O}\left(\frac{1}{z^2}\right).
\endeq
Recalling from the proof of Lemma \eqref{wI-ito-N-lemma} that 
$z{\bf W}_{m,n}$ is a constant matrix, the $\mathcal{O}(z^{-2})$ terms must 
be identically zero.  Combining this expression with 
\eqref{W-simplified} gives
\eq
\frac{\partial}{\partial x}{\bf N}_{-1}(x) + n\cdot r^{1/2}\sigma_3 = nr^{1/2}{\bf N}_0(x)\sigma_3{\bf N}_{0}(x)^{-1}.
\endeq
Taking the (12)-entry of both sides generates
\eq
\label{Nm1x-ito-N0}
\frac{\partial}{\partial x}[{\bf N}_{-1}(x)]_{12} = -2nr^{1/2}[{\bf N}_0(x)]_{11}[{\bf N}_0(x)]_{12}.
\endeq
Using \eqref{H-ito-Nm1} and \eqref{Nm1x-ito-N0} in \eqref{wII-ito-wI-and-H} 
completes the proof of the lemma.
\end{proof}

\section{Determination of the boundary curve}
\label{sec-boundary}

We begin the Riemann-Hilbert analysis by finding the $g$-function and 
related phase function $\varphi$.  This will be sufficient to specify the 
boundary of the elliptic region, which will be used in \S\ref{sec-rhp-analysis} 
to compute the asymptotics of the rational Painlev\'e-IV functions in the 
genus-zero region.

\subsection{Construction of the $g$-function}
Suppose two complex numbers $a=a(x,r)$ and $b=b(x,r)$ are given, along with an 
oriented contour $\Sigma=\Sigma(x,r)$ from $a$ to $b$ (specifying these 
quantities is part of the process of defining the $g$-function).  The 
genus-zero $g$-function is determined via the following Riemann-Hilbert problem.
\begin{rhp}[The $g$-function] Fix $x\in\mathbb{C}$ and $r\in[1,\infty)$ and 
find $g(z)=g(z;x,r)$ such that 
\begin{itemize}
\item[]\textbf{Analyticity:}  $e^{g(z;x,r)}$ is analytic for $z\in\mathbb{C}$ 
except on $\Sigma$, where it attains H\"older-continuous boundary values at all 
interior points.  The function $g(z;x,r)$ also has a logarithmic branch cut 
that will play no role since $g$ only appears exponentiated.  
\item[]\textbf{Jump condition:}  
\eq
\label{g-jump}
g_+(z)+g_-(z) = \theta(z)+\ell, \quad z\in\Sigma
\endeq 
for some constant $\ell=\ell(x;r)$.
\item[]\textbf{Normalization:}  
\eq
\label{g-normalization}
g(z)=\log z + \mathcal{O}\left(\frac{1}{z}\right), \quad z\to\infty.
\endeq
\end{itemize}
\end{rhp}
There are some values of $x$ for which it is not possible to pick a single 
connected contour 
$\Sigma$ such that this Riemann-Hilbert problem is solvable.  When it is 
possible, then the resulting outer model problem (see Riemann-Hilbert Problem 
\ref{rhp:R} below) has jumps on a single band and the associated Riemann 
surface is genus zero.  As a result, we dub the region where the 
Riemann-Hilbert problem where $g$ is solvable the \emph{genus-zero region} 
(see Definition \ref{genus-zero-def}).  We then show that the Painlev\'e-IV 
functions are (asymptotically) free of zeros and poles in this region.  

Given $g(z)$ and $\ell$, we could define a function $\varphi$ by 
\eq
\label{phi-def}
\varphi(z;x,r)=\theta(z;x,r) - 2g(z;x,r) + \ell.
\endeq 
In actuality, we will work in the opposite order, first determining 
$\varphi'(z)$, integrating to find $\varphi(z)$, and then 
using \eqref{phi-def} to find the explicit formula for $g(z)$.  
Note $\varphi'(z)$ is specified by the following Riemann-Hilbert problem.
\begin{rhp}[The phase function $\varphi$] Fix $x\in\mathbb{C}$ and 
$r\in[1,\infty)$ and find $\varphi'(z)\equiv\varphi'(z;x,r)$ such that 
\begin{itemize}
\item[]\textbf{Analyticity:}  $\varphi'(z;x,r)$
is analytic for $z\in\mathbb{C}$ except at $z=0$ and on $\Sigma$, where it 
attains H\"older-continuous boundary values at all interior points.
\item[]\textbf{Jump condition:}  
\eq
\varphi'_+(z)+\varphi'_-(z) = 0, \quad z\in\Sigma.
\endeq 
\item[]\textbf{Pole at $z=0$:}  
\eq
\varphi'(z) = \theta'(z) + \mathcal{O}(1) = -\frac{2}{z^3}+\frac{2r^{1/2}x}{z^2}+\frac{1-r}{z} + \mathcal{O}(1), \quad z\to 0.
\label{gen0-hp-pole}
\endeq
\item[]\textbf{Normalization:}  
\eq
\varphi'(z)=-\frac{1+r}{z} + \mathcal{O}\left(\frac{1}{z^2}\right), \quad z\to\infty.
\label{gen0-hp-normalization}
\endeq
\end{itemize}
\label{rhp-hprime}
\end{rhp}
We now see how the defining relations \eqref{gen0-Q-quartic} and 
\eqref{gen0-S-ito-Q} for $Q$ and $S$ arise.  If we momentarily assume $a(x;r)$, 
$b(x;r)$, and $\Sigma$ are known, then we can define $R(z;x,r)$ by 
\eqref{R-def}.  Furthermore, writing $a+b$ as $S$ and $R(0)$ as $Q$, then we 
can see that in order to satisfy the analyticity, jump, and normalization 
conditions in Rieman-Hilbert Problem \ref{rhp-hprime}, we can choose 
$\varphi'(z)$ to have the form 
\eq
\label{phiprime-def}
\varphi'(z) = -\left((1+r)z+\frac{2}{Q}\right)\frac{R(z)}{z^3}.
\endeq
Now for $\varphi'(z)$ to satisfy the pole condition \eqref{gen0-hp-pole} at 
$z=0$, $S$ and $Q$ must satisfy the moment conditions 
\eq
\frac{(1+r)Q^3-S}{2Q^2} = -r^{1/2}x, \quad \frac{4Q^2-2(1+r)SQ^3-S^2}{8Q^4} = \frac{r-1}{2}.
\endeq
Solving the first equation for $S$ yields the relation \eqref{gen0-S-ito-Q}.  
Plugging that into the second yields the quartic equation 
\eqref{gen0-Q-quartic} for $Q$.  The specific sheet so that 
$Q(x;r)=-x+\mathcal{O}(x^{-2})$ as $x\to\infty$ is chosen so the signature 
charts in Figures \ref{fig-phase-r1} and \ref{fig-phase-r10} hold.  
Furthermore, we have assumed 
$R^2 = z^2-(a+b)z+ab = z^2-Sz+Q^2$, so we must therefore specify $a$ and $b$ 
by \eqref{ab-def}.  

We pause to indicate how the branch points of $Q(x)$ can be identified.  For 
any branch point $x_b$, the pair $\{x_b,Q(x_b)\}$ must satisify 
\eqref{gen0-Q-quartic} as well as its derivative with respect to $Q$,
\eq
\label{Q-eqn-prime}
12(1+r)^2Q^3 + 24(1+r)r^{1/2}xQ^2 + 8(r-1+rx^2)Q = 0,
\endeq
since the implicit function theorem must fail at a branch point.  Multiplying 
\eqref{Q-eqn-prime} by $Q$ gives an equation with a term proportional to 
$Q^4$.  This can be used to remove the term proportional to $Q^4$ in 
\eqref{gen0-Q-quartic}, yielding 
\eq
8(1+r)r^{1/2}xQ^3 + 8(r-1+rx^2)Q^2 - 16 = 0.
\endeq
Now \eqref{Q-eqn-prime} can be used again to remove the term proportional to 
$Q^3$, giving 
\eq
\label{Q-quadratic}
3(1+r)(rx^2+1-r)Q^2 + 2r^{1/2}x(rx^2+r-1)Q + 6(1+r) = 0.
\endeq
Now dividing \eqref{Q-eqn-prime} gives an equation that can be used to 
eliminate the term proportional to $Q^2$, yielding a linear equation for $Q$
that gives
\eq
Q = \frac{-r^2x^4+4r^2+4r+4}{2r^{1/2}(1+r)(rx^3+2(1-r)x)}.
\endeq
Plugging this into \eqref{Q-quadratic} yields the octic equation 
\eqref{xc-polynomial} for $x$ that the branch points must satisfy.  This 
equation is actually quartic in $x^2$, and so the roots can be determined 
exactly.  For $r\in[1,\infty)$, two of the roots are on the real axis, two are 
on the imaginary axis, and one is in each open quadrant.  A series expansion of 
$Q$ about the points on the axes shows that $Q$ is actually analytic there, and 
the four branch points are the ones off the axes (recall that $Q$ is also the 
solution of a quartic \eqref{gen0-Q-quartic}, and so can be written down 
explicitly to perform the series expansions).  

We return to the process of determining $\varphi$.  Now $Q$, $S$, $a$, and $b$ 
are well defined by \eqref{gen0-Q-quartic}, \eqref{gen0-S-ito-Q}, and 
\eqref{ab-def}.  So far we have seen that, for any choice of $\Sigma$, if we 
define $R$ by \eqref{R-def} then $\varphi'(z)$ must be given by 
\eqref{phiprime-def}.  The time has come to specify $\Sigma$.  
Recall the definition of $\widetilde{R}$ in \eqref{Rtilde-def}.  
Then the function $\widetilde{\varphi}(z;x,r)\equiv \widetilde{\varphi}(z)$ as 
defined in \eqref{phitilde-def} is an antiderivative of \eqref{phiprime-def} 
(with $R$ replaced with $\widetilde{R}$).  
The integration 
constant is chosen so $\widetilde{\varphi}(a)=0$.  Now 
$\Re(\widetilde{\varphi}(a))=\Re(\widetilde{\varphi}(b))$, and for $|x|$ 
sufficiently large there are two 
contours connecting $a$ and $b$ that do not pass through $z=0$ (in fact, the 
existence of both of these contours is equivalent to being in the genus-zero 
region -- see Lemma \ref{VS-is-small-lemma}).  We choose $\Sigma$ to be the 
contour connecting $a$ to $b$ when traveling clockwise around the origin.  Now 
that $\Sigma$ is 
defined, we can define $R(z)$ by \eqref{R-def} (which amounts to a deformation 
of the branch cut for $\widetilde{R}(z)$), and define 
$\varphi(z;x,r)\equiv\varphi(z)$ via 
\eq
\begin{split}
\varphi(z) := & \frac{R(z)}{Qz^2} + \left(1+r-\frac{S}{2Q^3}\right)\frac{R(z)}{z} - (1+r)\log(2z+2R(z)-S) \\ 
  & + (r-1)\log\left(\frac{2QR(z)-Sz+2Q^2}{z}\right) + \log(S^2-4Q^2) - (1+r)i\pi.
\end{split}
\endeq
Here the branches of the logarithms are chosen so $\varphi_+(z)+\varphi_-(z)=0$ 
for $z\in\Sigma$, a choice that depends on both $x$ and $r$.  
The behavior of the Riemann-Hilbert problem is controlled by $\Re(\varphi(z))$
(see Figures \ref{fig-phase-r1} and \ref{fig-phase-r10}).  
\begin{figure}[h]
\includegraphics[width=2.1in]{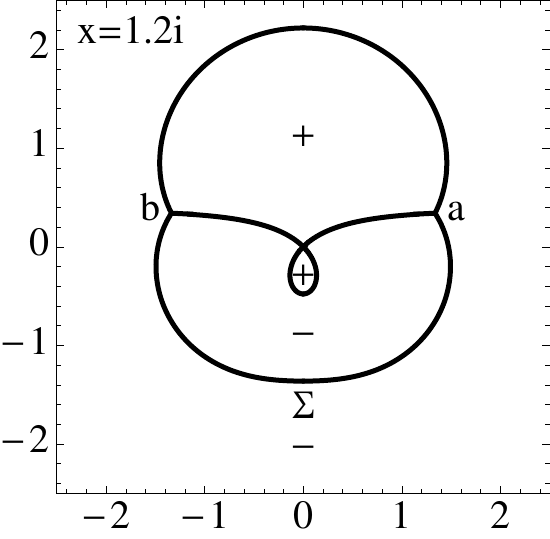} 
\includegraphics[width=2.1in]{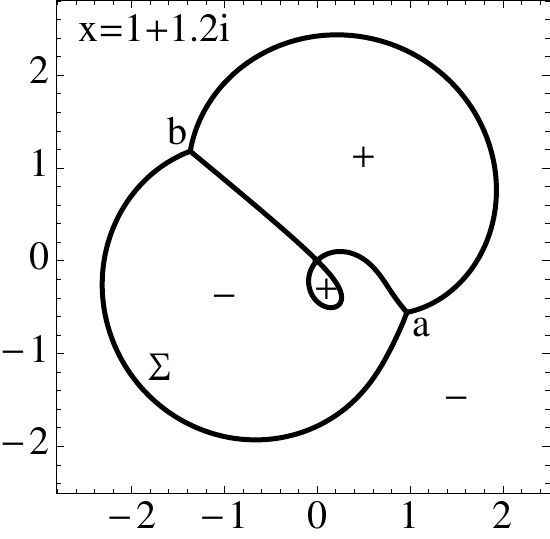} 
\includegraphics[width=2.1in]{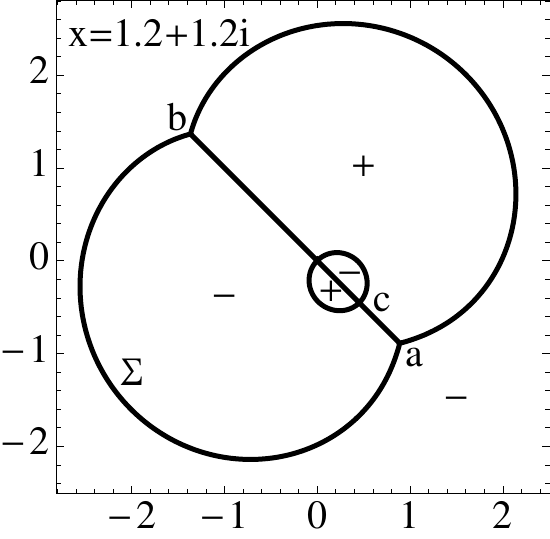} \\
\includegraphics[width=2.1in]{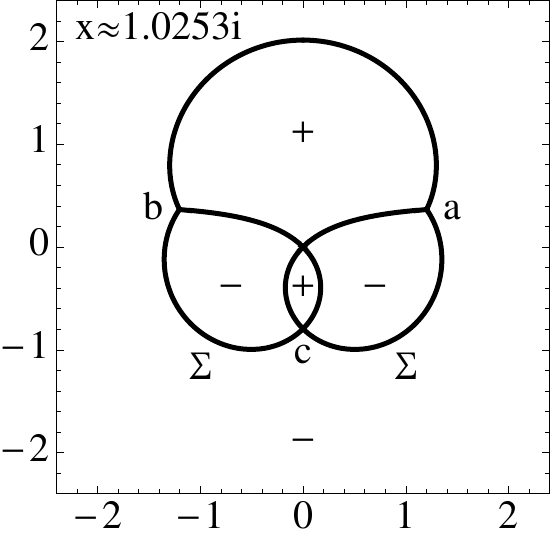}
\includegraphics[width=2.1in]{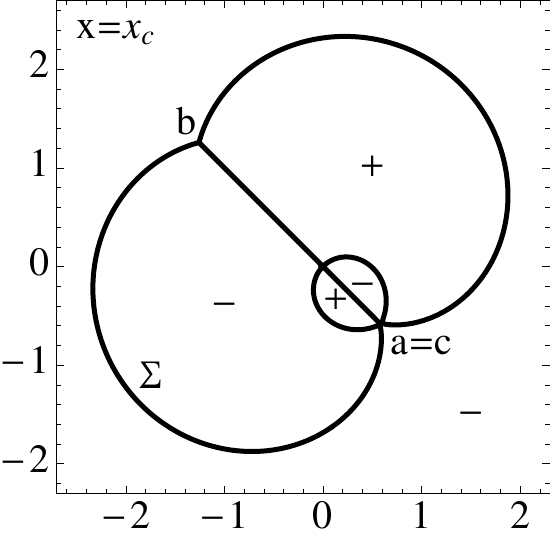}
\includegraphics[width=2.1in]{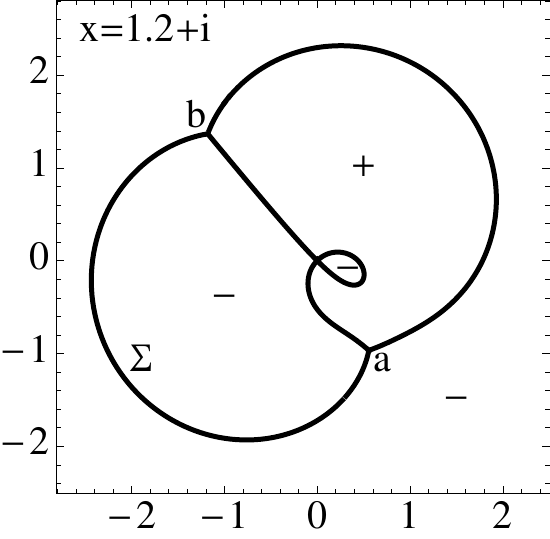} \\
\includegraphics[width=2.1in]{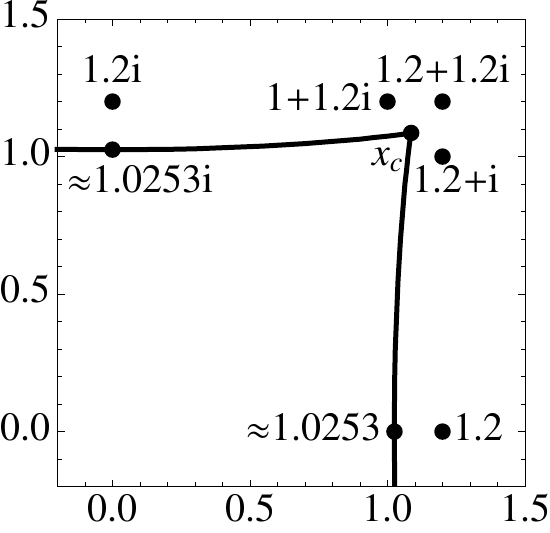}
\includegraphics[width=2.1in]{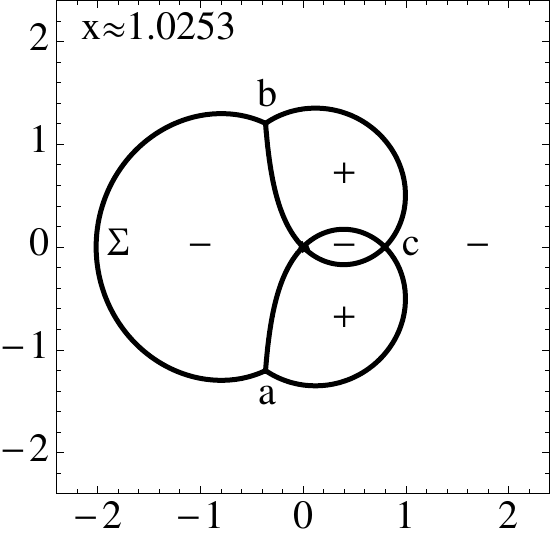}
\includegraphics[width=2.1in]{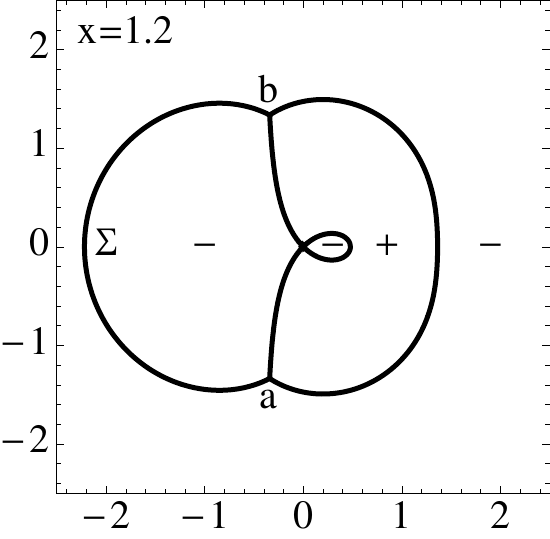}
\caption{Signature charts of $\Re(\varphi(z))$ in the complex $z$-plane with 
$r=1$ for different values of $x$ in the genus-zero region and on the boundary 
of the elliptic region.  
The band $\Sigma$ and the band endpoints $a$ and $b$ 
are indicated, as is $c$ when it lies on a zero-level line of 
$\Re(\varphi(z))$.  The topology of the zero-level lines is similar for other 
values of $r$ (see Figure \ref{fig-phase-r10} for $r=10$).  
\emph{Bottom left:}  The boundary of the elliptic region in the complex 
$z$-plane, along with the values of $x$ corresponding to the signature charts.  
}
\label{fig-phase-r1}
\end{figure}
\begin{figure}[h]
\includegraphics[width=2.1in]{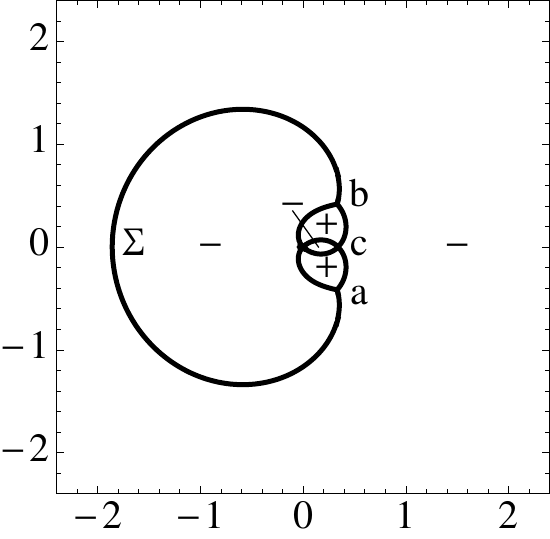}
\includegraphics[width=2.1in]{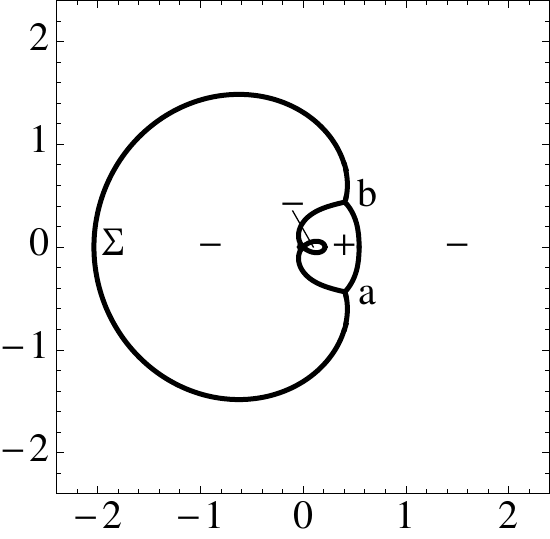}
\caption{Signature charts of $\Re(\varphi(z;x,r=10))$ in the complex $z$-plane. 
\emph{Left:}  $x\approx 1.2953$ (on the boundary of the elliptic region).  
\emph{Right:}  $x=1.4$ (in the genus-zero region).  
In both plots the band $\Sigma$ and the band endpoints $a$ and $b$ are 
indicated, as is the critical point $c$ when it lies on the zero-level line 
of $\Re(\varphi(z))$.}
\label{fig-phase-r10}
\end{figure}

Now we can set 
\eq
g(z;x,r) := \frac{1}{2}\theta(z;x,r) - \frac{1}{2}\varphi(z;x,r) + \frac{\ell(x;r)}{2},
\endeq
where only $\ell$ remains unspecified.  The role of $\ell$ is to ensure the 
normalization \eqref{g-normalization} for $g(z)$, so we choose
\eq
\ell(x;r) := 2\lim_{z\to\infty}\left(\log z - \frac{1}{2}\theta(z;x,r) + \frac{1}{2}\varphi(z;x,r)\right).
\endeq
While $\ell(x;r)$ can be computed in terms of elementary functions, we will 
not need its explicit form.

\subsection{The boundary and corners of the elliptic region}
\label{subsec-boundary}
For generic values of $x$ and $r$ the function $\varphi'(z)$ (recall 
\eqref{phiprime-def}) has three distinct zeros at 
\eq
a(x;r), \quad b(x;r), \quad \text{and} \quad c(x;r):=-\frac{2}{(1+r)Q(x;r)}.  
\label{gen0-abc}
\endeq
The transition from the genus-zero region to the elliptic region occurs when 
one of the zero-level lines of $\Re(\varphi)$ crosses $c$, i.e. 
$\Re(\varphi(c))=0$.  See the plots with $x\approx 1.0253$ and 
$x\approx 1.0253i$ in Figure \ref{fig-phase-r1} and the plot with 
$x\approx 1.2953$ in Figure \ref{fig-phase-r10}.  This condition can be written 
in the more explicit form \eqref{boundary-curve}, where $R_c=R(c)$.  It is 
important to note that the boundary of the curvilinear rectangles illustrated 
in Figures \ref{Hmn-zeros1}--\ref{Hmn-zeros3} are not the only curves along 
which $\Re(\varphi(c))=0$.  There are four additional curves that start at the 
four corners and tend to infinity (see Figure \ref{rephiatciszero}).  The 
signature chart of $\Re(\varphi(z))$ along one of these lines is illustrated 
in the plot with $x=1.2+1.2i$ in Figure \ref{fig-phase-r1}.  Nevertheless, 
the genus-zero Riemann-Hilbert analysis in \S\ref{sec-rhp-analysis} will go 
through without change along these curves, so they are part of the genus-zero 
region.  
\begin{figure}[h]
\includegraphics[width=2.1in]{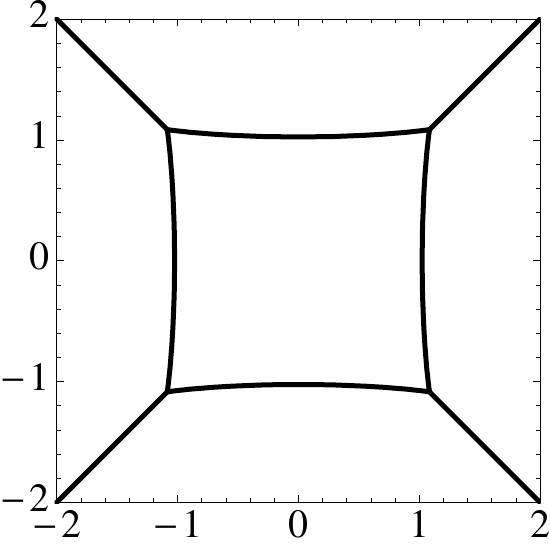}
\caption{Contours on which $\Re(\varphi(c))=0$ for $r=1$ in the complex 
$x$-plane.}
\label{rephiatciszero}
\end{figure}

\begin{remark}
As illustrated in Figure \ref{fig-phase-r1}, the breaking mechanism at the 
boundary of the elliptic region depends on whether 
$\arg(\overline{x_c})<\arg(x)<\arg(x_c)$ or 
$\arg(x_c)<\arg(x)<\arg(-\overline{x_c})$.  In the first case, a region 
in which $\Re(\varphi(z))>0$ is pinched off, as in the plot with 
$x\approx 1.0253$ in Figure \ref{fig-phase-r1}.  Looking ahead to Figure 
\ref{fig-contour-deformation}, this means it is no longer possible to pass 
the gap contour $\Gamma$ through this region in which its jump is exponentially 
close to the identity, and it is necessary to open a second band to control the 
Riemann-Hilbert problem once $x$ has moved into the elliptic region.  On the 
other hand, for $\arg(x_c)<\arg(x)<\arg(-\overline{x_c})$ (see the plot with 
$x\approx 1.0253i$ in Figure \ref{fig-phase-r1}), it is a region in which 
$\Re(\varphi(z))<0$ that is pinched off.  In this case the gap $\Gamma$ 
remains controlled, and the necessary modification occurs on the band 
$\Sigma$.  We conjecture that, as $x$ enters the elliptic region from the top 
boundary, a second band opens up directly on $\Sigma$ and then moves closer to 
the origin as $\Im(x)$ decreases.  This gives a consistent picture in which, 
just inside the boundary, there is one small and one large band.  As $x$ 
moves clockwise, the larger band rotates clockwise in the $z$-plane while the 
small band rotates counterclockwise.  The small band is near an endpoint 
of the large band exactly when $x$ is near a corner of the boundary region.  
We emphasize the Riemann-Hilbert analysis in \S\ref{sec-rhp-analysis} goes 
through uniformly for all $x$ in the genus-zero region as long as $x$ stays 
bounded away from the boundary curve.
\end{remark}

We now identify the corner points.  These are the values of $x$ for which 
$c(x)=a(x)$ or $c(x)=b(x)$ (see the plot with $x=x_c$ in Figure 
\ref{fig-phase-r1}, as well as \cite{Buckingham:2014} for a similar 
analysis for the Painlev\'e-II equation).  In either case we have 
$c^2-Sc+Q^2=0$ from 
\eqref{ab-def}.  Using \eqref{gen0-S-ito-Q} and \eqref{gen0-abc} to 
express $S$ and $c$ in terms of $Q$, $x$, and $r$ yields 
\eq
3(1+r)^2Q^4 + 4(1+r)r^{1/2}xQ^3 + 4 = 0.
\endeq
Adding this to \eqref{gen0-Q-quartic} gives
\eq
6(1+r)^2Q^4 + 12(1+r)r^{1/2}xQ^3 + 4(r-1+rx^2)Q^2 = 0,
\endeq
which is equivalent to \eqref{Q-eqn-prime}, the derivative of 
\eqref{gen0-Q-quartic} with respect to $Q$.  Once \eqref{Q-eqn-prime} holds, 
the analysis following that equation used to determine the branch points of 
$Q$ also holds, and so the corner points must satisfy \eqref{xc-polynomial}.  
While there are eight solutions to that equation, only four of them are off 
the coordinate axes, and so the geometry of the boundary shows that the 
corners are $\{\pm x_c,\pm\overline{x_c}\}$.

\section{Asymptotic expansion of the rational Painlev\'e-IV functions}
\label{sec-rhp-analysis}

We now apply the Deift-Zhou nonlinear steepest-descent method to obtain an 
approximation of ${\bf N}_{m,n}(z;x)$.  We perform a series of 
transformations
$$ {\bf N}_{m,n}(z;x) \to {\bf O}_{m,n}(z;x) \to {\bf P}_{m,n}(z;x) \to {\bf Q}_{m,n}(z;x) \approx {\bf R}_{m,n}(z;x).$$
The first transformation (to ${\bf O}_{m,n}$) deforms the jump contours away 
from the unit circle and onto $\Sigma\cup\Gamma$, where $\Gamma$ lies in a 
region where $\Re(\varphi)>0$.  The second transformation (to ${\bf P}_{m,n}$) 
introduces the $g$-function to regularize the jump matrices.  In the third 
transformation (to ${\bf Q}_{m,n}$) we open lenses, which replaces 
rapidly oscillating jump matrices with ones that are approximately constant.  
The associated Riemann-Hilbert problem is then replaced with a constant-jump 
problem that can be solved exactly for ${\bf R}_{m,n}$.  A key point is that 
the error in approximating ${\bf Q}_{m,n}$ with ${\bf R}_{m,n}$ can be 
controlled, as we will show in Lemma \ref{VS-is-small-lemma}.

\subsection{Initial deformation of the contours (${\bf N}_{m,n}\to{\bf O}_{m,n}$)}
The first step is to deform the jump contours away from the unit circle $C$.  
Define a smooth, non-self-intersecting contour $\Gamma$ starting at $b$ and 
ending at $a$ whose interior is entirely in the region in which 
$\Re(\varphi(z))>0$ (see Figure \ref{fig-contour-deformation}).  The existence 
of $\Gamma$ in the genus-zero region is shown below in Lemma 
\ref{VS-is-small-lemma}.  Then $\Sigma\cup\Gamma$ is a topological deformation 
of $C$, as shown in Figure \ref{fig-contour-deformation}.  Define $D_\text{in}$ 
to be the region in the interior of the unit circle but the exterior of 
$\Sigma\cup\Gamma$, and $D_\text{out}$ to be the region in the exterior of the 
unit circle but the interior of $\Sigma\cup\Gamma$ (again see Figure 
\ref{fig-contour-deformation}).  It is possible one of these regions may be 
empty.  Then define
\eq
{\bf O}_{m,n}(z;x) := \begin{cases} {\bf N}_{m,n}(z;x)\bbm 1 & \displaystyle\frac{1}{2\pi i}e^{-n\theta(z;x,r)} \vspace{.05in} \\ 0 & 1 \ebm, & z\in D_\text{in}, \\ {\bf N}_{m,n}(z;x)\bbm 1 & \displaystyle\frac{-1}{2\pi i}e^{-n\theta(z;x,r)} \\ 0 & 1 \ebm, & z\in D_\text{out}, \\ {\bf N}_{m,n}(z;x), & z\in\mathbb{C}\backslash\{D_\text{in}\cup D_\text{out}\}.\end{cases}
\endeq
Now ${\bf O}_{m,n}(z;x)$ satisfies exactly the same Riemann-Hilbert problem as 
${\bf N}_{m,n}(z;x)$ (i.e. Riemann-Hilbert Problem \ref{rhp:N}) with $C$ 
replaced by $\Sigma\cup\Gamma$.  
\begin{figure}[h]
\includegraphics[width=2.1in]{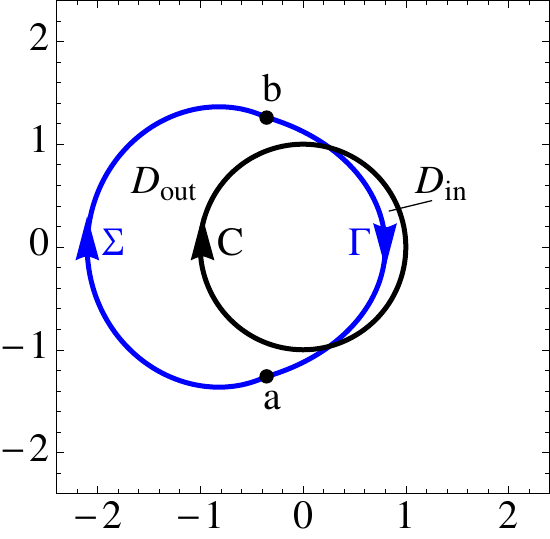}
\includegraphics[width=2.1in]{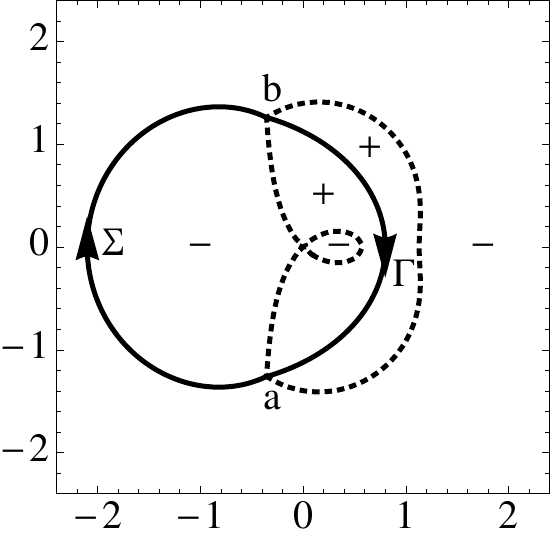}
\caption{\emph{Left:}  The contours $\Sigma$ and $\Gamma$ in relation to the 
unit circle $C$ in the complex $z$-plane for $r=1$ and $x=1.1$, along with the 
regions $D_\text{in}$ and $D_\text{out}$ used in the definition of 
${\bf O}_{m,n}(z;x)$.
\emph{Right:}  The contours $\Sigma$ and $\Gamma$ in relation to the signature 
chart of $\Re(\varphi(z))$ in the complex $z$-plane for $r=1$ and $x=1.1$.  
The contour $\Sigma$ lies 
on a zero-level line of $\Re(\varphi(z))$, while $\Gamma$ lies inside a region 
where $\Re(\varphi(z))>0$.}
\label{fig-contour-deformation}
\end{figure}

\subsection{Introduction of the $g$-function (${\bf O}_{m,n}\to{\bf P}_{m,n}$)}
Define
\eq
{\bf P}_{m,n}(z;x):=e^{-n\ell\sigma_3/2}{\bf O}_{m,n}(z;x)e^{-n(g(z;x,r)-\ell/2)\sigma_3}.
\endeq
The jump for $z\in C$ is 
\eq
{\bf V}_{m,n}^{({\bf P})} = {\bf P}_{m,n-}^{-1}{\bf P}_{m,n+} = \bbm e^{-n(g_+-g_-)} & \frac{1}{2\pi i}e^{n(g_++g_--\theta-\ell)} \\ 0 & e^{n(g_+-g_-)} \ebm.
\endeq
Recall that $\varphi(z;x,r)$ is defined in \eqref{phi-def}.  
Note from \eqref{g-jump} that $g_+(z)-g_-(z) = -\varphi_+(z) = \varphi_-(z)$ 
for $z\in\Sigma$.  Also taking into account the asymptotic behavior 
\eqref{g-normalization}, we are led to the following Riemann-Hilbert 
problem.  
\begin{rhp}[Introduction of $\varphi$]
Fix a complex number $x$ in the genus-zero region and $m,n\in\mathbb{N}$ with 
$m\geq n$ and set $r=m/n$.  
Determine the unique $2\times 2$ matrix ${\bf P}_{m,n}(z;x)$ with the 
following properties:
\begin{itemize}
\item[]\textbf{Analyticity:}  ${\bf P}_{m,n}(z;x)$
is analytic for $z\in\mathbb{C}$ except on $\Sigma\cup\Gamma$ where it 
achieves H\"older-continuous boundary values.   See Figure 
\ref{fig-contour-deformation}.  
\item[]\textbf{Jump condition:}  The boundary values taken by 
${\bf P}_{m,n}(z;x)$ are related by the jump conditions
${\bf P}_{m,n+}(z;x)={\bf P}_{m,n-}(z;x){\bf V}_{m,n}^{({\bf P})}(z;x)$, 
where
\eq
{\bf V}_{m,n+}^{({\bf P})}(z;x) = \begin{cases} \bbm e^{n\varphi_+(z;x,r)} & \displaystyle\frac{1}{2\pi i} \\ 0 & e^{n\varphi_-(z;x,r)} \ebm, & z\in\Sigma, \vspace{.05in} \\ \bbm 1 & \displaystyle\frac{1}{2\pi i}e^{-n\varphi(z;x,r)}  \\ 0 & 1 \ebm, & z\in \Gamma. \end{cases}
\endeq 
\item[]\textbf{Normalization:}  As $z\to\infty$, the matrix 
${\bf P}_{m,n}(z;x)$ satisfies the condition
\begin{equation}
{\bf P}_{m,n}(z;x) = \mathbb{I}+\mathcal{O}(z^{-1})
\end{equation}
with the limit being uniform with respect to direction.
\end{itemize}
\label{rhp:P}
\end{rhp}

\subsection{Opening of the lenses (${\bf P}_{m,n}\to{\bf Q}_{m,n}$)}
On $\Sigma$, the jump matrix ${\bf V}_{m,n}^{({\bf P})}$ has the 
factorization
\eq
\bbm e^{n\varphi_+} & \displaystyle\frac{1}{2\pi i} \\ 0 & e^{n\varphi_-} \ebm = \bbm 1 & 0 \\ 2\pi i e^{n\varphi_-} & 1 \ebm \bbm 0 & \displaystyle \frac{1}{2\pi i} \\ -2\pi i & 0 \ebm \bbm 1 & 0 \\ 2\pi i e^{n\varphi_+} & 1 \ebm.
\endeq
We introduce the lens regions $\Omega_\pm$ and the lens boundaries $L_\pm$ as 
shown in Figure \ref{fig-lenses}.
\begin{figure}[h]
\includegraphics[width=2.1in]{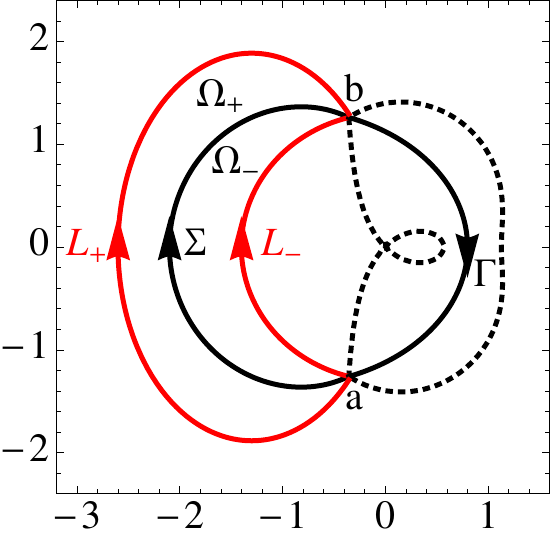}
\caption{The lens regions $\Omega_\pm$ and the lens boundaries $L_\pm$, along 
with the band $\Sigma$ and the gap $\Gamma$ in the complex $z$-plane for 
$r=1$ and $x=1.1$.  The 
zero-level lines of $\Re(\varphi)$ that are not jumps of ${\bf Q}_{m,n}$ are 
dotted.}
\label{fig-lenses}
\end{figure}
The boundaries $L_\pm$ are taken to lie inside the regions in which 
$\Re\varphi(z)<0$ and be such that $0\notin(\Omega_+\cup\Omega_-)$.  Make the 
change of variables
\eq
{\bf Q}_{m,n}(z;x):=\begin{cases} {\bf P}_{m,n}(z;x)\bbm 1 & 0 \\ -2\pi ie^{n\varphi(z;x,r)} & 1 \ebm, & z\in\Omega_+, \vspace{.05in} \\  {\bf P}_{m,n}(z;x)\bbm 1 & 0 \\ 2\pi ie^{n\varphi(z;x,r)} & 1 \ebm, & z\in\Omega_-, \\  {\bf P}_{m,n}(z;x), & \text{otherwise}.
\end{cases}
\endeq
We have the following Riemann-Hilbert problem.
\begin{rhp}[Lens-opened problem]
Fix a complex number $x$ in the genus-zero region and $m,n\in\mathbb{N}$ with 
$m\geq n$, and set $r=m/n$.  Determine the unique $2\times 2$ matrix 
${\bf Q}_{m,n}(z;x)$ with the following properties:
\begin{itemize}
\item[]\textbf{Analyticity:}  ${\bf Q}_{m,n}(z;x)$ is analytic for 
$z\in\mathbb{C}\backslash\{\Sigma\cup\Gamma\cup L_+\cup L_-\}$ with 
H\"older-continuous boundary values.  See Figure \ref{fig-lenses}.
\item[]\textbf{Jump condition:}  The boundary values taken by 
${\bf Q}_{m,n}(z;x)$ are related by the jump condition 
${\bf Q}_{m,n+}(z;x)={\bf Q}_{m,n-}(z;x){\bf V}_{m,n}^{({\bf Q})}(z;x)$, 
where
\eq
\label{Q-jumps}
{\bf V}_{m,n}^{({\bf Q})}(z;x) = \begin{cases} \bbm 0 & \displaystyle\frac{1}{2\pi i} \\ -2\pi i & 0 \ebm, & z\in\Sigma, \vspace{.05in} \\ \bbm 1 & 0 \\ 2\pi i e^{n\varphi(z;x,r)} & 1 \ebm, & z\in L_\pm, \vspace{.05in} \\ \bbm 1 & \displaystyle\frac{1}{2\pi i}e^{-n\varphi(z;x,r)}  \\ 0 & 1 \ebm, & z\in\Gamma. \end{cases}
\endeq 
\item[]\textbf{Normalization:}  As $z\to\infty$, the matrix 
${\bf Q}_{m,n}(z;x)$ satisfies the condition
\begin{equation}
{\bf Q}_{m,n}(z;x) = \mathbb{I}+\mathcal{O}(z^{-1})
\end{equation}
with the limit being uniform with respect to direction.
\end{itemize}
\label{rhp:Q}
\end{rhp}

\subsection{The model and error problems}
The jumps for ${\bf Q}_{m,n}(z)$ decay to the identity matrix except for  
$z\in\Sigma$ (although this decay is not uniform near $a$ and $b$).  We now 
define a model solution ${\bf R}_{m,n}(z)$ that is a good 
approximation for ${\bf Q}_{m,n}(z)$ (up to $\mathcal{O}(n^{-1})$) everywhere 
in the complex plane.  We begin by defining the outer model Riemann-Hilbert 
problem, which is obtained by neglecting all decaying jumps.  
\begin{rhp}[The outer model problem]
Fix a complex number $x$ in the genus-zero region and $m,n\in\mathbb{N}$ with 
$m\geq n$ and set $r=m/n$.  Determine the unique $2\times 2$ matrix 
${\bf R}_{m,n}^{\rm (out)}(z;x)$ with the following properties:
\begin{itemize}
\item[]\textbf{Analyticity:}  ${\bf R}_{m,n}^{\rm (out)}(z;x)$
is analytic in $z$ except on $\Sigma$ with H\"older-continuous boundary 
values in the interior of $\Sigma$ and at worst quarter-root singularities 
at the endpoints.  
\item[]\textbf{Jump condition:}  The boundary values taken by 
${\bf R}_{m,n}^{\rm (out)}(z;x)$ on $\Sigma$ are related by the jump condition 
\eq
{\bf R}_{m,n+}^{\rm (out)}(z;x)={\bf R}_{m,n-}^{\rm (out)}(z;x) \bbm 0 & \displaystyle \frac{1}{2\pi i} \\ -2\pi i & 0 \ebm.
\endeq 
\item[]\textbf{Normalization:}  As $z\to\infty$, the matrix 
${\bf R}_{m,n}^{\rm (out)}(z;x)$ satisfies the condition
\begin{equation}
{\bf R}_{m,n}^{\rm (out)}(z;x) = \mathbb{I}+\mathcal{O}(z^{-1})
\end{equation}
with the limit being uniform with respect to direction.
\end{itemize}
\label{rhp:R}
\end{rhp}
This constant-jump problem can be solved in a standard way by diagonalizing 
the matrix (thereby reducing the problem to two scalar problems) and then 
using the Plemelj formula.  Alternately, it is straightforward to check that 
Riemann-Hilbert Problem \ref{rhp:R} is satisfied by 
\eq
{\bf R}_{m,n}^{\rm (out)}(z;x) := \bbm \displaystyle \frac{\gamma(z;x,r)+\gamma(z;x,r)^{-1}}{2} & \displaystyle \frac{\gamma(z;x,r)-\gamma(z;x,r)^{-1}}{4\pi} \vspace{.05in}\\ \pi(\gamma(z;x,r)-\gamma(z;x,r)^{-1}) & \displaystyle \frac{\gamma(z;x,r)+\gamma(z;x,r)^{-1}}{2} \ebm,
\endeq
where
\eq
\gamma(z;x,r):=\left(\frac{z-a}{z-b}\right)^{1/4}
\endeq
is analytic for $z\notin\Sigma$ and satisfies $\lim_{z\to\infty}\gamma(z)=1$.

The outer model solution ${\bf R}_{m,n}^\text{(out)}(z)$ is a good 
approximation of ${\bf Q}_{m,n}(z)$ for all $z$ except in small 
$n$-independent neighborhoods $\mathbb{D}_a$ and $\mathbb{D}_b$ of the band 
endpoints $a$ and $b$, respectively.  Here the decay of the jumps on $L_\pm$ 
and $\Gamma$ to the identity is not uniform.  However, it is possible to 
construct functions ${\bf R}_{m,n}^{(a)}(z)$ and ${\bf R}_{m,n}^{(b)}(z)$ in 
terms of Airy functions that 
solve the Riemann-Hilbert problem exactly in their respective neighborhood and 
closely match the outer parametrix ${\bf R}_{m,n}^\text{(out)}(z)$ on the 
boundaries.  The construction of Airy parametrices is standard (see, for 
example, \cite{DeiftKMVZ:1999,Buckingham:2013}).  Here we follow 
\cite[\S4.1]{Bertola:2014}.  First, we have the local expansions 
\eq
\begin{split}
\varphi(z) & = C_a(z-a)^{3/2} + \mathcal{O}((z-a)^{5/2}), \quad z\in\mathbb{D}_a, \\
\varphi(z) & = 2\pi i + C_b(z-b)^{3/2} + \mathcal{O}((z-a)^{5/2}), \quad z\in\mathbb{D}_b,
\end{split}
\endeq
(for appropriate choices of the square roots) where $C_a$ and $C_b$ are nonzero 
and independent of $z$.  Then define two local coordinates 
\eq
s_a(z) := e^{i\pi}\left(\frac{3n}{4}\right)^{2/3}\phi(z)^{2/3} \text{ for } z\in\mathbb{D}_a; \quad \quad s_b(z) := \left(\frac{3n}{4}\right)^{2/3}(\phi(z)-2\pi i)^{2/3} \text{ for } z\in\mathbb{D}_b
\endeq
such that if $z\in\mathbb{D}_a$ then $\Gamma$ is mapped to the negative real 
axis, while  if $z\in\mathbb{D}_b$ then $\Gamma$ is mapped to the positve real 
axis.  Set ${\bf V}:=\displaystyle\frac{1}{\sqrt{2}}\bbm 1 & -i \\ -i & 1 \ebm$ 
and define the analytic prefactors
\eq
\begin{split}
{\bf B}_a(z) & := {\bf R}_{m,n}^\text{(out)}(z)(2\pi i)^{-\sigma_3/2}\bbm -i & -i \\ 1 & -1 \ebm(e^{-i\pi}s_a(z))^{\sigma_3/4}, \\ 
{\bf B}_b(z) & : = {\bf R}_{m,n}^\text{(out)}(z)(2\pi i)^{-\sigma_3/2}\bbm -i & i \\ 1 & 1 \ebm s_b(z)^{-\sigma_3/4}.
\end{split}
\endeq
Let ${\bf A}(s)$ be the function defined in 
\cite[(A.1)--(A.2)]{Bertola:2014} and built out of Airy functions with 
jumps on $\arg(s)\in\{0,\pm\frac{2\pi}{3},\pi\}$ as given in 
\cite[Figure 19]{Bertola:2014} and satisfying 
\eq
{\bf A}(s) = \frac{s^{\sigma_3/4}}{2\sqrt{\pi}}\bbm -1 & i \\ 1 & i \ebm \left(\mathbb{I}  + \frac{1}{48s^{3/2}}\bbm 1 & 6i \\ 6i & -1 \ebm + \mathcal{O}(s^{-3})\right)e^{-2s^{3/2}\sigma_3/3}, \quad s\to\infty.
\endeq
Also let $\widehat{\bf A}(s)$ be the function defined in 
\cite[(A.4)]{Bertola:2014} and built out of Airy functions with 
jumps on $\arg(s)\in\{0,\pm\frac{\pi}{3},\pi\}$ as given in 
\cite[Figure A.1]{Bertola:2014} and satisfying 
\eq
\widehat{\bf A}(s) = \frac{(e^{-i\pi}s)^{-\sigma_3/4}}{2\sqrt{\pi}}\bbm 1 & -i \\ 1 & i \ebm \left(\mathbb{I}  + \frac{i}{48s^{3/2}}\bbm -1 & 6i \\ 6i & 1 \ebm + \mathcal{O}(s^{-3})\right)e^{-2is^{3/2}\sigma_3/3}, \quad s\to\infty.
\endeq
Then the Airy parametrices are
\eq
\begin{split}
{\bf R}_{m,n}^{(a)}(z) & := i\sqrt{\pi}{\bf B}_a(z)\widehat{\bf A}(s_a(z))e^{2is_a(z)^{3/2}\sigma_3/3}(2\pi i)^{\sigma_3/2}, \quad z\in\mathbb{D}_a, \\
{\bf R}_{m,n}^{(b)}(z) & := -i\sqrt{\pi}{\bf B}_b(z){\bf A}(s_b(z))e^{2s_b(z)^{3/2}\sigma_3/3}(2\pi i)^{\sigma_3/2}, \quad z\in\mathbb{D}_b.
\end{split}
\endeq
The explicit form of the parametrix is only necessary to recover the 
$\mathcal{O}(n^{-1})$ terms in the solution of the Riemann-Hilbert problem. 
For us it suffices to know that ${\bf R}_{m,n}^{(a)}(z)$ satisfies the same 
jump conditions as ${\bf Q}_{m,n}(z)$ for $z\in\mathbb{D}_a$, 
${\bf R}_{m,n}^{(b)}(z)$ satisfies the same jump conditions as 
${\bf Q}_{m,n}(z)$ for $z\in\mathbb{D}_b$, and 
\eq
\label{Da-Db-jumps}
{\bf R}_{m,n}^{(a)}(z)={\bf R}_{m,n}^\text{(out)}(z)(\mathbb{I}+\mathcal{O}(n^{-1})) \text{ for } z\in\partial\mathbb{D}_a; \quad \quad {\bf R}_{m,n}^{(b)}(z)={\bf R}_{m,n}^\text{(out)}(z)(\mathbb{I}+\mathcal{O}(n^{-1})) \text{ for } z\in\partial\mathbb{D}_b
\endeq
uniformly for $x$ in the genus-zero region bounded away from the corners of the 
elliptic region.  At the corners one of the band endpoints collides with the 
third critical point $c$ and a different parametrix is required (see 
\cite{Buckingham:2014} for a related analysis for the rational Painlev\'e-II 
functions).  

The global model solution is now defined as 
\eq
{\bf R}_{m,n}(z;x):=\begin{cases} {\bf R}_{m,n}^\text{(out)}(z;x), & z\in\mathbb{C}\backslash\{\mathbb{D}_a\cup\mathbb{D}_b\}, \\ {\bf R}_{m,n}^{(a)}(z;x), & z\in\mathbb{D}_a, \\ {\bf R}_{m,n}^{(b)}(z;x), & z\in\mathbb{D}_b. \end{cases}
\endeq
The error or ratio function is 
\eq
{\bf S}_{m,n}(z;x):={\bf Q}_{m,n}(z;x){\bf R}_{m,n}(z;x)^{-1}.
\endeq
It satisfies the following Riemann-Hilbert problem.  Note in particular that 
${\bf S}_{m,n}(z)$ has no jump across $\Sigma$ or inside $\mathbb{D}_a$ or 
$\mathbb{D}_b$, but does have jumps across $\partial\mathbb{D}_a$ and 
$\partial\mathbb{D}_b$.  
\begin{rhp}[The error problem]
Fix a complex number $x$ in the genus-zero region and $m,n\in\mathbb{N}$ with 
$m\geq n$ and set $r=m/n$.  Determine the unique $2\times 2$ matrix 
${\bf S}_{m,n}(z;x)$ with the following properties:
\begin{itemize}
\item[]\textbf{Analyticity:}  ${\bf S}_{m,n}(z;x)$
is analytic in $z$ except on 
$J^{(S)}:=\partial\mathbb{D}_a\cup\partial\mathbb{D}_b\cup((L_+\cup L_-\cup \Gamma)\cap(\mathbb{D}_a\cup\mathbb{D}_b)^c)$ with H\"older-continuous boundary values.  We 
orient $\partial\mathbb{D}_a$ and $\partial\mathbb{D}_b$ clockwise.  See 
Figure \ref{fig-error-contours}.  
\item[]\textbf{Jump condition:}  The boundary values taken by 
${\bf S}_{m,n}(z;x)$ are related by the jump conditions 
${\bf S}_{m,n+}(z;x)={\bf S}_{m,n-}(z;x){\bf V}_{m,n}^{({\bf S})}(z;x)$, 
where
\eq
\label{S-jumps}
\begin{split}
{\bf V}_{m,n}^{({\bf S})}(z;x) & = {\bf R}_{m,n-}(z;x){\bf V}_{m,n}^{({\bf Q})}(z;x){\bf R}_{m,n+}(z;x)^{-1} \\
& = 
\begin{cases} {\bf R}_{m,n}^{\rm (out)}(z;x) \bbm 1 & 0 \\ 2\pi i e^{n\varphi(z;x,r)} & 1 \ebm {\bf R}^{\rm (out)}_{m,n}(z;x)^{-1}, & z\in L_\pm\cap(\mathbb{D}_a\cup\mathbb{D}_b)^c, \vspace{.05in} \\ {\bf R}_{m,n}^{\rm (out)}(z;x)\bbm 1 & \displaystyle\frac{1}{2\pi i}e^{-n\varphi(z;x,r)}  \\ 0 & 1 \ebm {\bf R}^{\rm (out)}_{m,n}(z;x)^{-1}, & z\in\Gamma\cap(\mathbb{D}_a\cup\mathbb{D}_b)^c, \\ {\bf R}^{\rm (a)}_{m,n}(z;x){\bf R}^{\rm (out)}_{m,n}(z;x)^{-1}, & z\in\partial\mathbb{D}_a, \\ {\bf R}^{\rm (b)}_{m,n}(z;x){\bf R}^{\rm (out)}_{m,n}(z;x)^{-1}, & z\in\partial\mathbb{D}_b. \end{cases}
\end{split}
\endeq 
\item[]\textbf{Normalization:}  As $z\to\infty$, the matrix 
${\bf S}_{m,n}(z;x)$ satisfies 
\begin{equation}
{\bf S}_{m,n}(z;x) = \mathbb{I}+\mathcal{O}(z^{-1})
\end{equation}
with the limit being uniform with respect to direction.
\end{itemize}
\label{rhp:S}
\end{rhp}
\begin{figure}[h]
\includegraphics[width=2.1in]{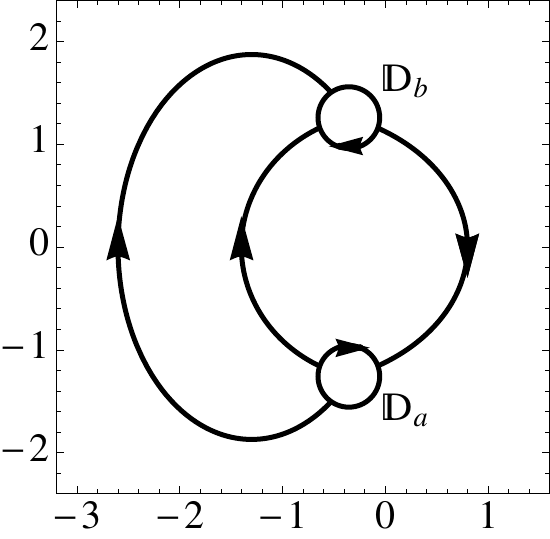}
\caption{The jump contours $J^{(S)}$ for the error problem ${\bf S}_{m,n}(z;x)$ 
in the complex $z$-plane for $r=1$ and $x=1.1$.}
\label{fig-error-contours}
\end{figure}
We now show that the jump matrices for the error solution ${\bf S}_{m,n}$ are 
small as $n\to\infty$.  
\begin{lemma}
\label{VS-is-small-lemma}
Fix $\delta>0$.  Then for $z\in J^{(S)}$
\eq
\label{VS-is-small}
{\bf V}_{m,n}^{({\bf S})}(z;x) = \mathbb{I} + \mathbb{O}\left(\frac{1}{n}\right)
\endeq
with the error term uniform in $x$ if ${\rm dist}(x,E_r)>\delta$.
\end{lemma}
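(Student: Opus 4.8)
The plan is to verify the estimate \eqref{VS-is-small} separately on each of the four pieces of $J^{(S)}$ appearing in \eqref{S-jumps}, after first establishing the geometric fact on which all four rely: that the contours $\Sigma$, $\Gamma$, and $L_\pm$ can actually be placed with the sign properties demanded of them. I regard this latter point---the global topology of the zero-level set of $\Re(\varphi)$---as the main obstacle, and I would dispose of it first. The zeros of $\varphi'$ (recall \eqref{phiprime-def}) are $a$, $b$, and $c$, and $\varphi$ has a double pole at $z=0$. Near the band endpoints the local expansions give $\varphi\sim C_a(z-a)^{3/2}$ and $\varphi-2\pi i\sim C_b(z-b)^{3/2}$, so three zero-level lines emanate from each at equal angles with adjacent sectors alternating in sign; since $\Re(\varphi)=0$ on $\Sigma$, one line from $a$ and one from $b$ fuse into the band. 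The leading singular term of $\varphi$ at the pole is $z^{-2}$, so $\Re(\varphi)\sim|z|^{-2}\cos(2\arg z)$ produces four zero-level lines through the origin with alternating signs, while the normalization \eqref{gen0-hp-normalization} gives $\Re(\varphi)\to-\infty$ as $z\to\infty$, placing the point at infinity in $\{\Re(\varphi)<0\}$. The only way these local pictures can fail to assemble into the signature charts of Figures \ref{fig-phase-r1}--\ref{fig-phase-r10} is for a zero-level line to pass through $c$, i.e.\ for $\Re(\varphi(c))=0$; by Definition \ref{genus-zero-def} this is precisely the boundary of $E_r$ (together with the auxiliary unbounded curves of Figure \ref{rephiatciszero}, whose crossing pinches off only a sector that does not obstruct the contour placement). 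Hence for $x$ in the genus-zero region I can choose $\Gamma$ from $b$ to $a$ inside $\{\Re(\varphi)>0\}$ and $L_\pm$ inside $\{\Re(\varphi)<0\}$ flanking $\Sigma$ with $0\notin\Omega_\pm$, and---crucially for uniformity---when $\mathrm{dist}(x,E_r)>\delta$ these contours may be taken so that $|\Re(\varphi)|$ is bounded below by a positive constant depending only on $\delta$ and $r$ along $\Gamma$ and $L_\pm$.

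\textbf{Estimates on $\Gamma$ and $L_\pm$.} On $L_\pm\cap(\mathbb{D}_a\cup\mathbb{D}_b)^c$ the only nontrivial entry of the unconjugated jump is $2\pi i\,e^{n\varphi}$, and $\Re(\varphi)<0$ there makes $|e^{n\varphi}|=e^{n\Re(\varphi)}$ exponentially small; likewise on $\Gamma\cap(\mathbb{D}_a\cup\mathbb{D}_b)^c$ the entry $\tfrac{1}{2\pi i}e^{-n\varphi}$ is exponentially small since $\Re(\varphi)>0$. Conjugating by ${\bf R}_{m,n}^{\rm (out)}$ preserves this smallness because $\det{\bf R}_{m,n}^{\rm (out)}=1$ and $\gamma,\gamma^{-1}$ are bounded above and below at fixed distance from $a$ and $b$, so both ${\bf R}_{m,n}^{\rm (out)}$ and its inverse are uniformly bounded on these contours. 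Using the uniform sign margin from the geometric step, these jumps are $\mathbb{I}+\mathcal{O}(e^{-cn})$ uniformly for $\mathrm{dist}(x,E_r)>\delta$, which is far stronger than the claimed $\mathcal{O}(1/n)$.

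\textbf{Estimates on $\partial\mathbb{D}_a$ and $\partial\mathbb{D}_b$.} Here I would invoke the matching relation \eqref{Da-Db-jumps}: on $\partial\mathbb{D}_a$ one has ${\bf R}_{m,n}^{(a)}(z)={\bf R}_{m,n}^{\rm (out)}(z)(\mathbb{I}+\mathcal{O}(n^{-1}))$, so the jump is ${\bf R}_{m,n}^{(a)}({\bf R}_{m,n}^{\rm (out)})^{-1}={\bf R}_{m,n}^{\rm (out)}(\mathbb{I}+\mathcal{O}(n^{-1}))({\bf R}_{m,n}^{\rm (out)})^{-1}=\mathbb{I}+\mathcal{O}(n^{-1})$, the last step again using boundedness of ${\bf R}_{m,n}^{\rm (out)}$ and its inverse on the fixed circle $\partial\mathbb{D}_a$ (which stays away from the quarter-root singularities at $a,b$); the same applies verbatim at $b$. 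This is the genuine source of the $\mathcal{O}(1/n)$ in the statement. Uniformity in $x$ then reduces to the uniform boundedness, for $\mathrm{dist}(x,E_r)>\delta$, of the Airy coefficients $C_a,C_b$ away from $0$ and $\infty$ and of the conformal maps $s_a,s_b$, which hold provided $a,b$ stay a definite distance from $c$ and from the origin---a condition guaranteed away from $E_r$ and its corners.

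\textbf{Main obstacle.} Combining the three cases yields \eqref{VS-is-small}. The only delicate point is the geometric step, and within it the verification that the auxiliary $\Re(\varphi(c))=0$ curves lying outside $E_r$ are harmless; I would settle this by determining the sign of $\Re(\varphi)$ in the sector pinched at $c$ using the local data at $a$, $b$, $0$, and $\infty$ together with the explicit formula for $\varphi$, exactly as summarized in the discussion surrounding Figure \ref{rephiatciszero}.
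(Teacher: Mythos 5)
Your reduction and both estimate steps coincide with the paper's proof: the paper likewise disposes of $\partial\mathbb{D}_a\cup\partial\mathbb{D}_b$ in one line by citing \eqref{Da-Db-jumps} (this is indeed the sole source of the $\mathcal{O}(1/n)$), and obtains exponential smallness of the jumps on $\Gamma\cap(\mathbb{D}_a\cup\mathbb{D}_b)^c$ and $L_\pm\cap(\mathbb{D}_a\cup\mathbb{D}_b)^c$ from the sign conditions $\Re(\varphi)>0$ and $\Re(\varphi)<0$, the conjugation by the bounded matrix ${\bf R}_{m,n}^{\rm (out)}$ being harmless away from $a$ and $b$. The problem is the geometric step, which you correctly identify as the main obstacle but then resolve with an assertion that does not hold as stated.

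Your claim that ``the only way these local pictures can fail to assemble into the signature charts of Figures \ref{fig-phase-r1}--\ref{fig-phase-r10} is for a zero-level line to pass through $c$'' is false as a static statement: the local data at $a$, $b$, the origin, and infinity underdetermine the global picture. After $\Sigma$ accounts for one zero-level arc joining $a$ to $b$, the remaining two ends at $a$, two at $b$, and four at the origin admit two planar-consistent matchings, both compatible with alternating local signs and with $\Re(\varphi)<0$ near infinity, and both possible with $\Re(\varphi(c))\neq 0$: either a second $a$--$b$ arc passes around the opposite side of the origin (the genus-zero chart), or all four remaining arcs from $a$ and $b$ terminate at the origin. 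In the second configuration no admissible $\Gamma$ exists--any $b$-to-$a$ path in a region where $\Re(\varphi)>0$ fails to encircle the origin together with $\Sigma$, so $\Sigma\cup\Gamma$ is not a deformation of $C$--and this is precisely the configuration that must be ruled out. The paper rules it out by a continuation argument with an anchor: it verifies the chart explicitly for $|x|$ large with $x$ real or purely imaginary, and then notes that as $x$ varies the topology can change only when $c$ crosses the zero-level set, which happens exactly on $\partial E_r$ and on the four semi-infinite arcs of Figure \ref{rephiatciszero}; hence the genus-zero chart persists throughout the exterior of $E_r$ off those arcs. Your proposal contains the second ingredient (the role of $\Re(\varphi(c))=0$) but not the anchor, so it cannot distinguish the two configurations. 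Relatedly, for $x$ lying on the auxiliary arcs the paper does not perform the local sign analysis you suggest; it slides $x$ continuously along the arc from the corner $x_c$, where the chart is known, keeping $c$ on the zero-level set throughout--a route you would do well to adopt, since on those arcs four level lines genuinely cross at $c$ and a purely local count is again inconclusive.
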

\begin{proof}
For $z\in\partial\mathbb{D}_a\cup\partial\mathbb{D}_b$, the necessary 
estimate is given by \eqref{Da-Db-jumps}.  What remains is to show that, in 
the genus-zero region, the signature chart of $\Re(\varphi(z))$ has the 
topology shown in the genus-zero plots in Figures \ref{fig-phase-r1} and 
\ref{fig-phase-r10}.  More specifically, we 
need to show that (except for the endpoints $a$ and $b$), $L_\pm$ can be placed 
entirely in a region in which $\Re(\varphi(z))<0$, and $\Gamma$ entirely in a 
region in which $\Re(\varphi(z))>0$.  If so, then with these choices we find 
that ${\bf V}_{m,n}^{({\bf S})}(z;x)$ is exponentially close to the 
identity on the relevant parts of $L_\pm$ and $\Gamma$, and so 
\eqref{VS-is-small} holds.  

As a level set of a function that is harmonic except on $\Sigma$ and at $z=0$, 
$\{z:\Re(\varphi(z))=0\}$ consists of a finite number of smooth arcs.  Local 
analysis at infinity shows there are no zero-level lines of $\Re(\varphi(z))$ 
there.  The only points at which two or more zero-level lines can intersect 
are the critical points $a$, $b$, and $c$ (see \eqref{gen0-abc}) or the 
origin.  A direct calculation shows that $a$ and $b$ are distinct and nonzero.  
We also saw in \S\ref{subsec-boundary} that $c$ can coincide with $a$ or $b$, 
but only at the corners of the elliptic region, which we avoid.  Furthermore, 
$c$ cannot be zero since $Q$ has no finite singularities.  Therefore, we can 
assume all four points $a$, $b$, $c$, and 0 are distinct.  By construction, 
$\Re(\varphi(a))=\Re(\varphi(b))=0$, and local analysis shows there are three 
zero-level lines of $\Re(\varphi(z))$ emanating from both $a$ and $b$.  
Similarly, local analysis at the pole $z=0$ shows there are four zero-level 
lines of $\Re(\varphi(z))$ intersecting at the origin.  As we have seen in 
\S\ref{subsec-boundary}, $\Re(\varphi(c))$ is generically nonzero, but there 
are four semi-infinite arcs in the complex $x$-plane along which 
$\Re(\varphi(c))=0$, in which case four zero-level lines of $\Re(\varphi(z))$ 
intersect at $c$.  

First, assume $x$ is such that $\Re(\varphi(c))\neq 0$.  In this case we have 
three arcs each emerging from $a$ and $b$ and four from 0.  Therefore not all 
the arcs from $a$ and $b$ can connect to the origin, and at least one must 
join $a$ and $b$.  Closed contours that are level lines of harmonic functions 
must enclose singularities, and so there are two options:  either a second 
arc connects $a$ to $b$ and passes around the opposite side of the origin from 
the first such arc, or the other four arcs from $a$ and $b$ all connect to the 
origin.  We are in the first situation for $x$ sufficiently large and either 
$x$ purely real or purely imaginary (for illustration see the plots with 
$x=1.2$ and $x=1.2i$ in Figure \ref{fig-phase-r1}).  In this case the signature 
chart necessarily has the form show in those plots since $\Re(\varphi(z))<0$ 
for $z$ sufficiently large.  The only allowable mechanism for the contour 
topology to change as $x$ varies is for $c$ to intersect a zero-level line 
of $\Re(\varphi(z))$, which we have seen only occurs on the semi-infinite arcs. 
Therefore, off these four arcs we see that the contours $L_\pm$ and $\Gamma$ 
can be chosen appropriately in the exterior of the elliptic region.

We now consider $x$ such that $\Re(\varphi(c))=0$.  The signature chart at a 
corner point can be seen to have the form shown in the plot with $x=x_c$ in 
Figure \ref{fig-phase-r1}.  It is possible to continuously vary $x$ to the 
value that interests us keeping $c$ on a zero-level curve of $\Re(\varphi(z))$. 
Therefore the signature chart of $\Re(\varphi(z))$ must (topologically) have 
the form illustrated in the plot with $x=1.2+1.2i$ in Figure 
\ref{fig-phase-r1}, from which it is clear the contours $L_\pm$ and $\Gamma$ 
can be chosen as needed.
\end{proof}

We have finally arrived at a small-norm Riemann-Hilbert problem for 
${\bf S}_{m,n}(z;x)$, that is, one with jumps close to the identity.  The 
following analysis is standard (see, for example, \cite{DeiftZ:1993} or 
\cite[Appendix B]{Buckingham:2013}).  Recursively define the functions 
\eq
{\bf U}_0(z):=\mathbb{I}, \quad {\bf U}_k(z):=\frac{-1}{2\pi i}\int_{J_-^{(S)}}\frac{{\bf U}_{k-1}(u)({\bf V}_{m,n}^{({\bf S})}(u)-\mathbb{I})}{z-u}du,
\endeq
in which $J_-^{(S)}$ means the integration is performed along the minus-side of 
$J^{(S)}$.  Then ${\bf S}_{m,n}(z)$ is the sum of an infinite Neumann series:
\eq
{\bf S}_{m,n}(z) = \mathbb{I} - \frac{1}{2\pi i}\sum_{k=1}^\infty \int_{J^{(S)}}\frac{{\bf U}_{k-1}(u)({\bf V}_{m,n}^{({\bf S})}(u)-\mathbb{I})}{z-u}du.
\endeq
This gives us the bound
\eq
{\bf S}_{m,n}(z) = \left(\mathbb{I}+\mathcal{O}\left(\frac{1}{(|z|+1)n}\right)\right), \quad n\to\infty
\endeq
that holds uniformly for $z\in\mathbb{C}\backslash J^{(S)}$ and for $x$ a fixed 
distance away from the elliptic region.

\subsection{The asymptotic expansion}
We now prove the main theorems.
\begin{proof}[Proof of Theorems \ref{thm-wI}, \ref{thm-wII}, and \ref{thm-wIII}]
Retracing the various transformations gives
\eq
{\bf N}_{m,n}(z;x) = e^{n\ell\sigma_3/2}{\bf S}_{m,n}(z;x){\bf R}_{m,n}(z;x)e^{n(g(z;x,r)-\ell/2)\sigma_3}, \quad z\in\mathbb{C}\backslash\{\Omega_+\cup\Omega_-\cup D_\text{in} \cup D_\text{out}\}.
\endeq
We therefore have 
\eq
\label{N-unraveled}
\begin{split}
{\bf N}_{m,n}(z) = \left(\mathbb{I}+\mathcal{O}\left(\frac{1}{(|z|+1)n}\right)\right) \bbm \displaystyle \frac{\gamma(z)+\gamma(z)^{-1}}{2}e^{ng(z)} & \displaystyle \frac{\gamma(z)-\gamma(z)^{-1}}{4\pi}e^{-n(g(z)-\ell)} \vspace{.05in}\\ \pi(\gamma(z)-\gamma(z)^{-1})e^{n(g(z)-\ell)} & \displaystyle \frac{\gamma(z)+\gamma(z)^{-1}}{2}e^{-ng(z)} \ebm, \\
z\in\mathbb{C}\backslash\{\Omega_+\cup\Omega_-\cup D_\text{in} \cup D_\text{out} \cup \mathbb{D}_a \cup \mathbb{D}_b \cup J^{(S)}\}.
\end{split}
\endeq
In particular, this expression holds for $z=0$ and for $|z|$ sufficiently 
large.  We expand $g(z)$ and $\gamma(z)$ about $z=0$:
\eq
g(z;x,r) = g_0(x,r) + g_1(x,r)z + \mathcal{O}(z^2), \quad \gamma(z;x,r) = \gamma_0(x,r) + \gamma_1(x,r)z + \mathcal{O}(z^2),
\endeq
wherein
\eq
\label{gamma0-gamma1}
\gamma_0 = \left(\frac{a}{b}\right)^{1/4}, \quad \gamma_1 = \frac{a-b}{4ab}\left(\frac{a}{b}\right)^{1/4}
\endeq
(interestingly, it will turn out that we will not need the explicit form of 
$g_0$ or $g_1$).  Thus, recalling the expansion \eqref{N-expansion} for 
${\bf N}_{m,n}$, we compute
\eq
\label{N0-N1-entries}
\begin{gathered}[]
[{\bf N}_0]_{11} = \frac{\gamma_0+\gamma_0^{-1}}{2}e^{ng_0}(1+\mathcal{O}(n^{-1})), \quad [{\bf N}_0]_{12} = \frac{\gamma_0-\gamma_0^{-1}}{4\pi}e^{-n(g_0-\ell)}(1+\mathcal{O}(n^{-1})), \\
[{\bf N}_0]_{21} = \pi(\gamma_0-\gamma_0^{-1})e^{n(g_0-\ell)}(1+\mathcal{O}(n^{-1})), \quad [{\bf N}_0]_{22} = \frac{\gamma_0+\gamma_0^{-1}}{2}e^{-ng_0}(1+\mathcal{O}(n^{-1})), \\
[{\bf N}_1]_{11} = \frac{1}{2}\left[(\gamma_0+\gamma_0^{-1})ng_1 + \gamma_1 - \frac{\gamma_1}{\gamma_0^2}\right]e^{ng_0}(1+\mathcal{O}(n^{-1})), \\
[{\bf N}_1]_{21} = \pi\left[(\gamma_0-\gamma_0^{-1})ng_1 + \gamma_1 + \frac{\gamma_1}{\gamma_0^2}\right]e^{n(g_0-\ell)}(1+\mathcal{O}(n^{-1})).
\end{gathered}
\endeq
Inserting these into \eqref{wI-ito-N} gives
\eq
\frac{1}{n^{1/2}}w_{m,n}^{(I)}(m^{1/2}x) = 2\frac{\gamma_1(x,r)}{\gamma_0(x,r)}\frac{(1-\gamma_0(x,r)^2)}{(1+\gamma_0(x,r)^2)} + \mathcal{O}(n^{-1}).
\endeq
Using the expressions \eqref{gamma0-gamma1} for $\gamma_0$ and $\gamma_1$ 
produces
\eq
\frac{1}{n^{1/2}}w_{m,n}^{(I)}(m^{1/2}x) = \frac{1}{(a(x,r)b(x,r))^{1/2}} - \frac{a(x,r)+b(x,r)}{2a(x,r)b(x,r)} + \mathcal{O}(n^{-1}).
\endeq
Finally, using the identities $S=a+b$ and $Q=-(ab)^{1/2}$ gives 
\eqref{wI-ito-Q-and-S} in the genus-zero region.  This completes the proof of 
Theorem \ref{thm-wI}.  See also Figure \ref{wI-approx-plots}.

Next, we compute the asymptotic expansion of $w_{m,n}^{(II)}$, starting 
from Lemma \ref{wII-ito-N-lemma}.  From \eqref{N-unraveled} we have 
\eq
[{\bf N}_{m,n}(z)]_{12} = \frac{\gamma(z)-\gamma(z)^{-1}}{4\pi}e^{-n(g(z)-\ell)}\left(1+\mathcal{O}\left(n^{-1}\right)\right)
\endeq
for $x$ in the genus-zero region.  We expand $\gamma(z)$ at infinity as 
\eq
\gamma(z) = 1 + \frac{b-a}{4z} + \mathcal{O}\left(\frac{1}{z^2}\right).
\endeq
Using the last two equations along with $g(z)=\log(z)+\mathcal{O}(z^{-1})$ 
and the expansion \eqref{N-inf-expansion} shows 
\eq
[{\bf N}_{-1}]_{12} = \frac{(b-a)e^{n\ell}}{8\pi}(1+\mathcal{O}(n^{-1})).
\endeq
Taking this along with \eqref{N0-N1-entries} and then \eqref{gamma0-gamma1} 
shows
\eq
\frac{[{\bf N}_0]_{11}[{\bf N}_0]_{12}}{[{\bf N}_{-1}]_{12}} = \frac{\gamma_0^2-\gamma_0^{-2}}{b-a} = -\frac{1}{(ab)^{1/2}} = \frac{1}{Q}.
\endeq
We now plug this and \eqref{wI-ito-Q-and-S} into the result of Lemma 
\ref{wII-ito-N-lemma} to see 
\eq
\frac{1}{n^{1/2}}w_{m+1,n}^{(II)}((m+1)^{1/2}x) = \left( \frac{1}{Q(x,r)} - \frac{S(x,r)}{2Q(x,r)^2} + \mathcal{O}\left(n^{-1}\right) \right)\left(1+\mathcal{O}\left(m^{-1}\right)\right).
\endeq
As long as we agree $r=\frac{m}{n}$ is fixed, we can replace 
$\mathcal{O}(m^{-1})$ with $\mathcal{O}(n^{-1})$.  Therefore, we have 
\eq
\frac{1}{n^{1/2}}w_{m,n}^{(II)}(m^{1/2}x) = \frac{1}{Q(x,\frac{m-1}{n})} - \frac{S(x,\frac{m-1}{n})}{2Q(x,\frac{m-1}{n})^2} + \mathcal{O}\left(n^{-1}\right).
\endeq
From the dependence of $Q(x,r)$ and $S(x,r)$ on $r$, we can replace 
$Q(x,\frac{m-1}{n})$ and $S(x,\frac{m-1}{n})$ with $Q(x,r)$ and $S(x,r)$, 
respectively, at the price of an $\mathcal{O}(n^{-1})$ error, so we obtain 
our final result \eqref{wII-ito-Q-and-S}.  This completes the proofs of 
Theorems \ref{thm-wII} and \ref{thm-wIII} (as Theorem \ref{thm-wIII} follows 
immediately from Theorems \ref{thm-wI} and \ref{thm-wII}).  See Figure 
\ref{wII-approx-plots} for plots demonstrating the convergence for 
$w_{m,n}^{(II)}$.
\end{proof}

\end{document}